\renewcommand\footnotetextcopyrightpermission[1]{}
\tiny\color{gray},
\definecolor{revised-color}{rgb}{0, 0, 0}
\definecolor{topic-color}{rgb}{0.72, 0.22, 0.22}
\definecolor{edits-color}{rgb}{0.03, 0.60, 0.57}
\newcommand{\revised}[1]{{\color{revised-color} #1}}
\newcommand{\edits}[1]{#1}
\newcommand{\name}{Bamboo\xspace}
\newcommand{\system}{Bamboo\xspace}
\newcommand{\txn}[1]{\texttt{T#1}}
\newcommand{\code}[1]{\tt{#1}}
\newcommand{\ww}{Wound-Wait\xspace}
\newcommand{\waitd}{Wait-Die\xspace}
\newcommand{\owners}{\texttt{owners}\xspace}
\newcommand{\waiters}{\texttt{waiters}\xspace}
\newcommand{\retired}{\texttt{retired}\xspace}
\newcommand{\secref}[1]{Section~\ref{#1}}
\newcommand{\figref}[1]{Figure~\ref{#1}}
\newcommand{\lstref}[1]{Listing~\ref{#1}}
\newcommand{\algoref}[1]{Algorithm~\ref{#1}}
\newtheorem{property}{Property}
\newtheorem{theorem}{Theorem}
\newtheorem{definition}{Definition}
\newtheorem{lemma}{Lemma}
\definecolor{comment-color}{rgb}{0.5,0.1,0.1}
\newcommand{\codeComment}[1]{\textnormal{\color{comment-color}{\textit{\#
#1}}}\unskip}
\let\oldnl\nl
\newcommand{\nonl}{\renewcommand{\nl}{\let\nl\oldnl}}
  \providecommand\BibTeX{{%
    \normalfont B\kern-0.5em{\scshape i\kern-0.25em b}\kern-0.8em\TeX}}}
\begin{document}
\fancyhead{} 
\title{Releasing Locks As Early As You Can: Reducing Contention of Hotspots by Violating Two-Phase Locking (Extended Version)}


\author{Zhihan Guo}
\affiliation{
\institution{University of Wisconsin-Madison}
\city{Madison}
\state{WI}
\country{USA}
}
\email{zhihan@cs.wisc.edu}

\author{Kan Wu}
\affiliation{
\institution{University of Wisconsin-Madison}
\city{Madison}
\state{WI}
\country{USA}
}
\email{kanwu@cs.wisc.edu}

\author{Cong Yan}
\affiliation{
\institution{Microsoft}
\city{Redmond}
\state{WA}
\country{USA}
}
\email{coyan@microsoft.com }

\author{Xiangyao Yu}
\affiliation{
\institution{University of Wisconsin-Madison}
\city{Madison}
\state{WI}
\country{USA}
}
\email{yxy@cs.wisc.edu}


\begin{abstract}
  Hotspots, a small set of tuples frequently read/written by a large number of transactions, cause contention in a concurrency control protocol. 
While a hotspot may comprise only a small fraction of a transaction's execution time, conventional strict two-phase locking allows a transaction to release lock only after the transaction completes, which leaves significant parallelism unexploited. Ideally, a concurrency control protocol serializes transactions only for the duration of the hotspots, rather than the duration of transactions.  

We observe that exploiting such parallelism requires violating two-phase locking. In this paper, we propose \name, a new concurrency control protocol that can enable such parallelism by modifying the conventional two-phase locking, while maintaining the same guarantees in correctness. 
\revised{We thoroughly analyzed the effect of cascading aborts involved in reading uncommitted data and discussed optimizations that can be applied to further improve the performance.} Our evaluation on TPC-C shows a performance improvement up to 4$\times$ compared to the best of pessimistic and optimistic baseline protocols. On synthetic workloads that contain a single hotspot, \name achieves a speedup up to \revised{19$\times$} over baselines.
\end{abstract}





\maketitle

\section{Introduction}
Modern highly contentious transactional workloads suffer from hotspots. A hotspot is one or a small number of database records that are frequently accessed by a large number of concurrent transactions. Conventional concurrency control protocols need to serialize such transactions in order to support strong isolation like serializability, even though the hotspot may comprise only a small fraction of a transaction’s execution time.
\figref{fig:hotspot} illustrates the effect using a single hotspot of tuple $A$. 
For both pessimistic (\figref{fig:hotspot-2pl}) and optimistic (\figref{fig:hotspot-occ}) concurrency control, transactions wait or abort/restart at the granularity of entire transactions. 

Ideally, we want a concurrency control protocol to \textit{serialize transactions only for the duration of the hotspots} (e.g., in \figref{fig:hotspot-ideal}, transaction \txn{2} can access the hotspot immediately after \txn{1} finishes writing it) but execute the rest of the transactions in parallel. If the hotspot comprises only a small fraction of the transaction's runtime, such an ideal protocol can improve performance substantially.

\begin{figure}[t]%
    \centering
    \begin{subfigure}[t]{0.32\columnwidth}
        \includegraphics[width=.75\linewidth,valign=t]{./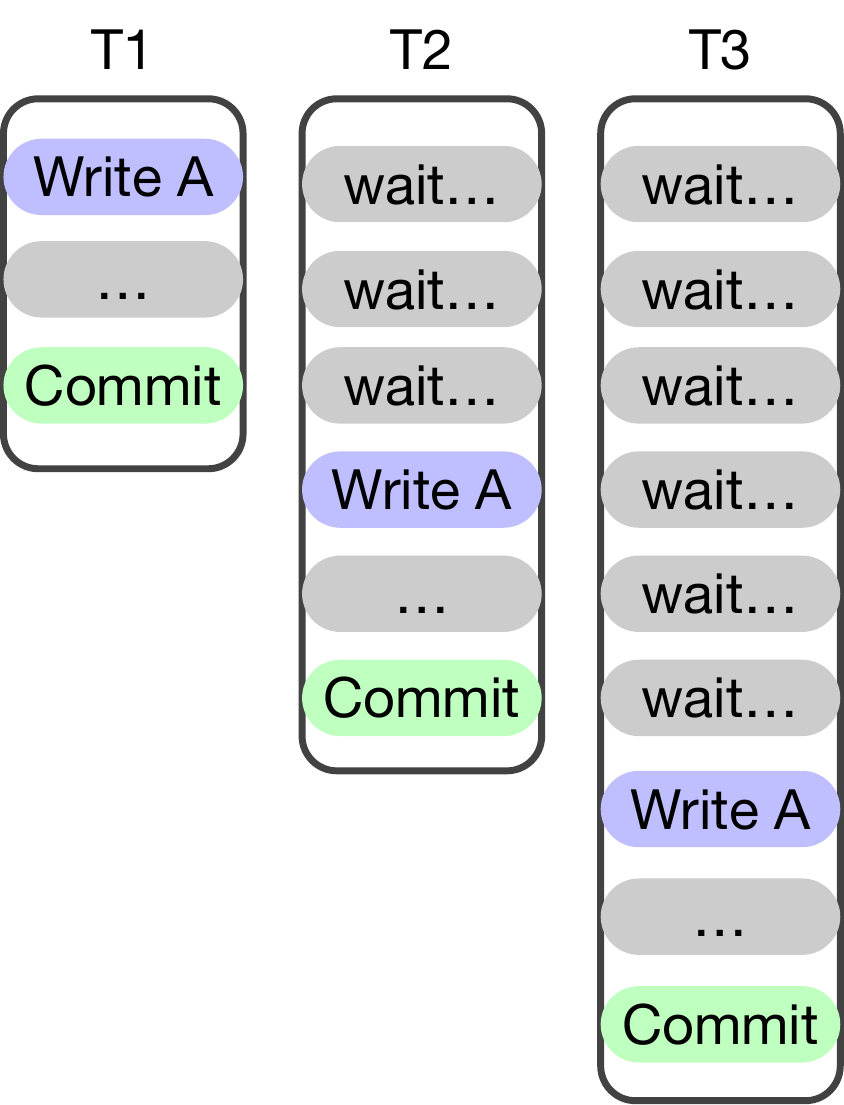} 
        \caption{2PL}
        \label{fig:hotspot-2pl} 
    \end{subfigure}
    \begin{subfigure}[t]{0.32\columnwidth}
        \includegraphics[width=.75\linewidth,valign=t]{./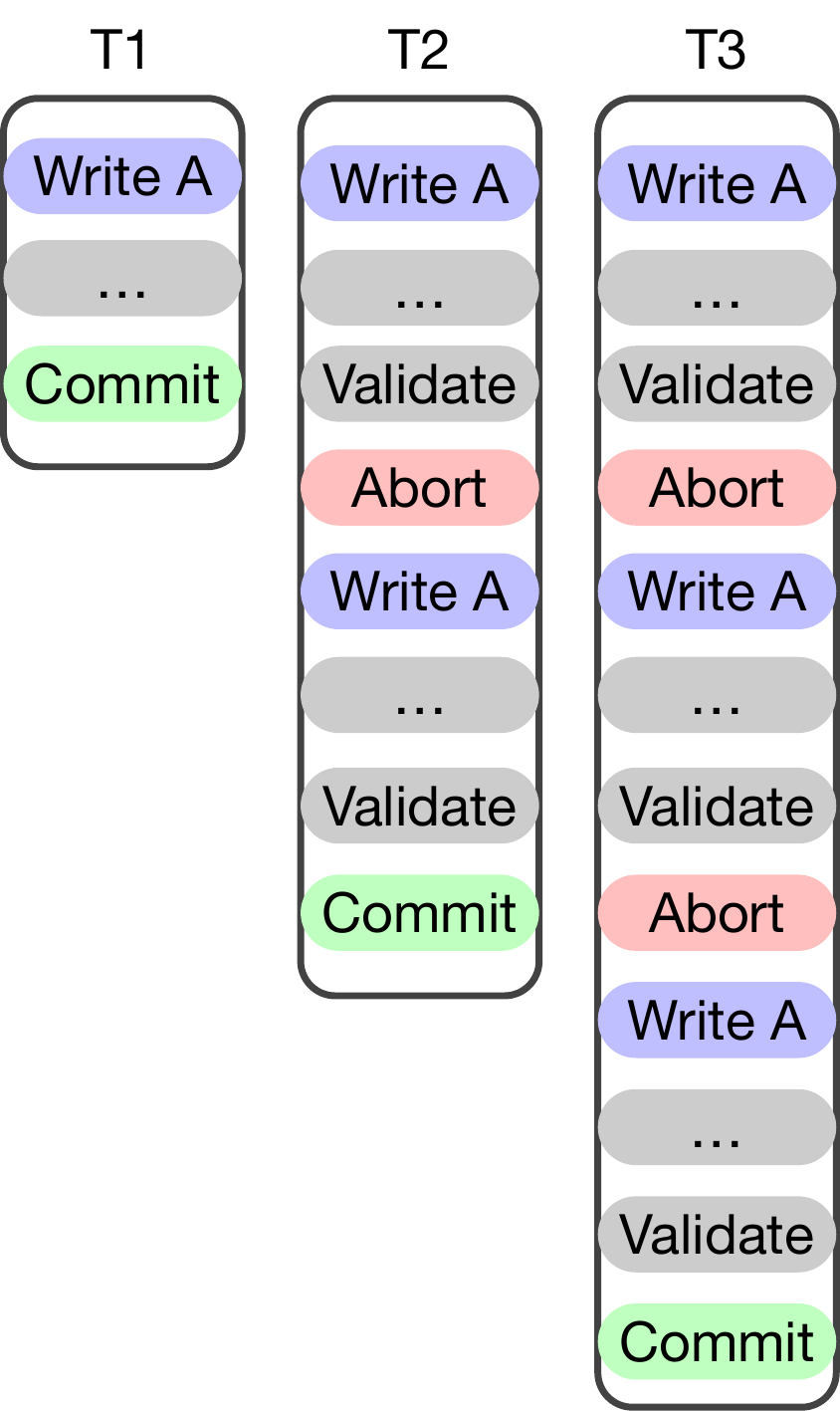} 
        \caption{OCC}
        \label{fig:hotspot-occ} 
    \end{subfigure} 
    \begin{subfigure}[t]{0.32\columnwidth}
        \includegraphics[width=.75\linewidth,valign=t]{./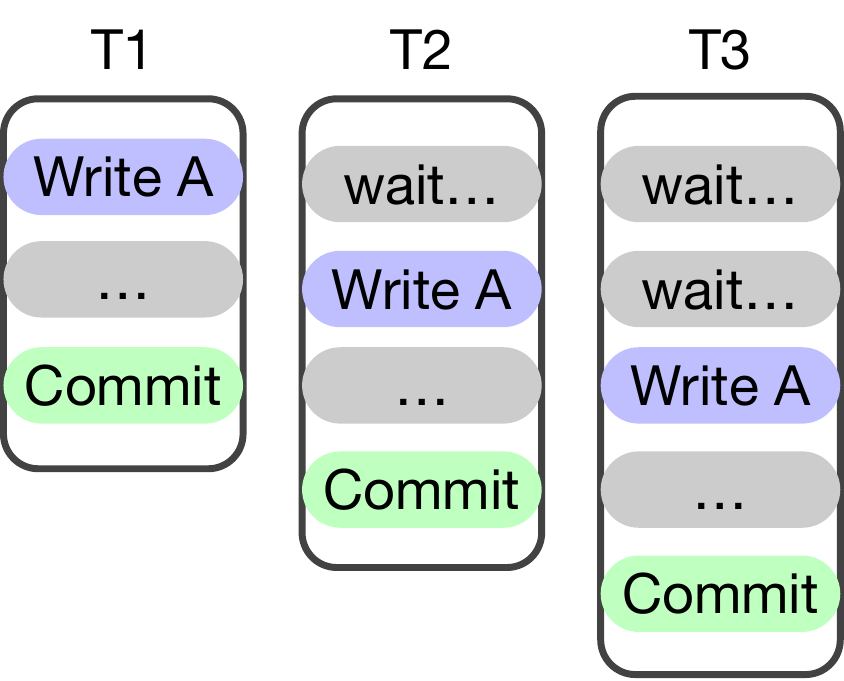} 
        \caption{ideal}
        \label{fig:hotspot-ideal} 
    \end{subfigure} 
    \vspace{-.1in}
    \caption{Schedules of transactions with a hotspot A under 2PL, OCC, and an ideal case. \revised{("Write" means read-modify-write)}} 
    \label{fig:hotspot}%
\end{figure}

\revised{Many production systems (MS Orleans \cite{orleans-txn}, IBM IMS/VS~\cite{gawlick1985varieties}, Hekaton \cite{diaconu13, larson11}, etc.) and research work \cite{agrawal1995ordered, ding2018improving, faleiro15, graefe2013controlled, opt-distributed, johnson2010aether, kimura2012efficient, mu2019deferred, shasha1995transaction, xie2015high, yan2016leveraging} mitigate hotspots by adding extra complication, but cannot achieve the ideal protocol mentioned above.}
In particular, the ideal protocol needs to read dirty data written by another transaction that has not committed yet. For pessimistic concurrency control, this violates the conventional definition of 2PL --- \txn{1} can acquire new locks after releasing locks to other transactions. For OCC and \edits{hybrid concurrency control protocols such as MOCC~\cite{wang2016mostly} and CormCC~\cite{tang2018toward}}, a transaction makes its writes visible only after the execution finishes, which is inherently incompatible with the notion of accessing dirty writes early.

Transaction chopping~\cite{shasha1995transaction,weikum2001transactional,zhang2013transaction} and its recent improvements (e.g., IC3~\cite{wang2016scaling} and Runtime Pipelining~\cite{xie2015high}) are a line of research that tried to enable early reads of dirty data. 
Transaction chopping performs program analysis to decompose a transaction into sub-transactions and allows a sub-transaction to make local updates visible immediately after it finishes. 
While these techniques can substantially improve performance, they have several severe limitations. \textbf{First}, these methods require the \textit{full knowledge of the workload} including the number of transaction templates and the columns/tables each transaction will access. Any new ad-hoc transaction incurs an expensive re-analysis of the entire workload. \textbf{Second}, chopping must follow specific criteria
to avoid deadlocks and ensure serializability, which limits the level of parallelism can be potentially exploited (see \secref{sec:txn-chopping} for details). \textbf{Third}, conservative conflict detections based on the limited information before execution can enforce unnecessary waiting. For example, in IC3, two transactions accessing the same column of different tuples may end up causing contention. 



\edits{In this paper, we aim to explore the design space of allowing dirty reads for general database transactions without the extra assumptions made in transaction chopping.} To this end, we propose \textbf{\name}, a pessimistic concurrency control protocol allowing transactions to read dirty data during the execution phase (thus violating 2PL), while still providing serializability. \name is based on the \ww variant of 2PL and can be easily integrated into existing locking schemes. It allows a transaction to \textit{retire} its lock on a tuple after its last time updating the tuple so that other transactions can access the data. 
Annotations of the last write can be provided by programmers or programming analysis. 
To enforce serializability, \name tracks dependency of dirty reads through the lock table and aborts transactions when the dependency is violated. 

One well-known problem of violating 2PL is the introduction of \textit{cascading aborts}~\cite{agrawal1995ordered} --- an aborted transaction causes all transactions that have read its dirty data to also abort. If not properly controlled, cascading aborts lead to a significant waste of resources and performance degradation. 
Through \name, this paper explores the design space and trade-off of cascading aborts, evaluates its overhead, and proposes optimizations to mitigate these aborts. 
In summary, this paper makes the following contributions.
\begin{itemize}[leftmargin=.2in]
	\item We developed \name, a new concurrency control protocol that violates 2PL to improve parallelism for transactional workloads without requiring the knowledge of the workload ahead of time. \name is provably correct. 
	\item We conducted a thorough analysis (both qualitatively and quantitatively) of the cascading abort effect, and proposed optimizations to mitigate such aborts.
	\item We evaluated \name in the context of both \textit{interactive transactions} and \textit{stored procedures}. \edits{In TPC-C, \name demonstrated a performance improvement up to 2$\times$ for stored procedures and 4$\times$ for interactive transactions compared to the best baseline (i.e., Wait-Die and Silo respectively). \name also outperforms IC3 by 2$\times$ when the attributes of hotspot tuples in TPC-C are truly shared by transactions.}
\end{itemize}

\vspace{-.15in}
\section{Background and Motivation} \label{sec:related}

\secref{sec:2pl} describes the background of two-phase locking, with a special focus on the \ww variant which \name is based on. Then, \edits{\secref{sec:txn-chopping} discusses how transaction chopping mitigates the hotspot issue.}

\subsection{Two-Phase Locking (2PL)} \label{sec:2pl}


Two-phase locking (2PL) is the most widely used class of concurrency control in database systems. In 2PL, reads and writes are synchronized through explicit locks in shared (\textit{SH}) or exclusive (\textit{EX}) mode. A transaction operates on a tuple only if it has become an ``owner'' of the corresponding lock. 

According to Bernstein and Goodman \cite{bernstein1981concurrency}, 2PL forces two rules in acquiring locks: 1) conflicting locks are not allowed at the same time for the same data; 2) a transaction cannot acquire more locks once it releases any. 

The second rule requires every transaction obtaining locks to follow two phases: \textit{growing phase} and \textit{shrinking phase}. A transaction can acquire locks in the growing phase but will enter the shrinking phase if they ever releases a lock. In the shrinking phase, no more locks should be acquired. This rule guarantees serializability of executions by ensuring no cycles of dependency among transactions. 

For a lock request that violates the first rule, 2PL may put the requesting transaction on the waiting queue until the lock is available. 
Two major approaches exist to avoid deadlocks due to cycles of waiting: \textit{deadlock detection} and \textit{deadlock prevention}. The former explicitly maintains a central \textit{wait-for} graph and checks for cycles periodically. The graph becomes a scalability bottleneck with highly parallel modern hardware~\cite{yu2014}. Deadlock prevention technique instead allows waiting only when certain criteria are met. The two mostly popular protocols under this category are \ww and \waitd~\cite{bernstein81, rosenkrantz1978system}. 

\vspace{.05in}
\noindent\textbf{Wound-Wait Variant of 2PL}
\vspace{.05in}

In \ww, each transaction is assigned a timestamp when it starts execution; transactions with smaller timestamps have higher priority. When a conflict occurs, the requesting transaction $T$ compares its own timestamp with the timestamps of the current lock owners --- owners whose timestamps are bigger than $T$ are aborted, namely \texttt{Wound}. Then $T$ either becomes the new owner (i.e., all current owners are aborted) or waits for the lock (i.e., some owners remain), namely \texttt{Wait}. 
The lock entry for each tuple maintains \owners and \waiters as two lists of transactions that are owning or waiting for the lock on the tuple (\figref{fig:data-structure}). \waiters can be sorted based on the transactions' timestamps to simplify the process of moving transactions from \waiters to \owners.

\ww is deadlock-free because a transaction can only wait for other transactions that have smaller timestamps. In the wait-for graph, this means all the edges are from transactions with larger timestamps to transactions with smaller timestamps, which inherently prevents cycles. 
Besides deadlock-freedom, \ww is also starvation-free as the oldest transaction has the highest priority and will never abort due to a conflict. \ww is the concurrency control protocol used in Google Spanner~\cite{corbett12, malkhi2013spanner}.

\vspace{.05in}
\noindent\textbf{Wait-Die Variant of 2PL}
\vspace{.05in}

Different from \ww, when a conflict occurs, \waitd allows transactions with smaller timestamps to wait (i.e., \texttt{Wait}) and transactions with larger timestamps to self-abort (i.e., \texttt{Die}). \waitd is also deadlock- and starvation-free; it is the concurrency control protocol used in Microsoft Orleans~\cite{orleans-txn}.
\vspace{-.15in}

\subsection{Transaction Chopping}
\label{sec:txn-chopping}

\edits{Similar to \name, transaction chopping~\cite{shasha1995transaction} aims to increase concurrency when hotspots are present. In particular, it chops a transaction into smaller sub-transactions and allow an update to be visible after the sub-transaction finishes but before the entire transaction commits. 
In particular, an SC-graph is created based on static analysis of the workload, where each sub-transaction represents a node in the graph. Sub-transactions of the same transaction are connected by sibling (S) edges. Sub-transactions of different transactions are connected by conflict (C) edges if they have potential conflicts. Chopping requires no cycle in the graph and only the first piece can roll-back or abort. It obtains the finest chopping that can guarantee safeness based on static information. 

IC3~\cite{wang2016scaling} is the state-of-the-art concurrency control protocol in this line of research. IC3 achieves fine-grained chopping through column-level static analysis. Sub-transactions accessing different columns of the same table will no longer introduce C-edges. As IC3 allows cycles in the SC-graph, during runtime, it tracks dependencies of transactions and enforces pieces involving C-edges to execute in order to maintain serializability. Moreover, it proposes optimistic execution to enforce waiting just on validation and commit phases for non-conflicting transactions accessing same columns of different tuples. Although inducing more aborts, the optimistic approach still show advantages under high contention. 

However, IC3 still has several limitations. First, it assumes column accesses of all transactions to be known before execution and does not support ad-hoc transactions. Second, chopping must guarantee no crosses of C-edges to avoid potential deadlocks. For example, if one transaction accesses table A before B while the other accesses table B before A. The accesses of table A and B must be merged into one piece, limiting concurrency. Third, column-level static analysis does not exploit the concurrency when transactions accessing same columns of different tuples; it helps reduce more contention only when transactions access different columns of the same tuple. 
We will show quantitative evaluations of IC3 compared with \name in \secref{sec:exp-ic3}. 


}
\section{\name}
\label{sec:protocol}

The basic idea of \name is simple --- in certain controlled circumstances, we allow other transactions to violate an exclusive lock held by a particular transaction. A transaction's dirty updates can be accessed after it has finished its updates on the tuple, following the idea shown in \figref{fig:hotspot-ideal}. 


\subsection{Challenges of Violating 2PL}
\label{sec:challenges}

Although violating 2PL offers great performance potential, it also brings two key challenges that we highlight below. 



\vspace{.05in}
\noindent\textbf{Challenge 1: Dependency Tracking}
\vspace{.05in}

A conventional 2PL protocol uses locks to track dependencies among transactions. 2PL protocols use various techniques (cf. Section~\ref{sec:2pl}) to prevent/break a cycle in the dependency graph. 
We call an edge in a conventional dependency graph a \textit{lock-induced} edge. 

In contrast, \name allows a transaction to read dirty value 
without waiting for locks. Such a \textit{read-after-write} dependency can be part of a cycle and yet is not captured by a conventional lock. For example, \txn{1} may read \txn{2}'s dirty write on record A and \txn{2} may read \txn{1}'s dirty write on record B. Such a cycle is not captured by the ``wait-for'' relationship. We call such a dependency edge a \textit{dirty-read-induced} edge.

For \name to work efficiently, we need a new deadlock avoidance mechanism that can avoid cycles caused by both lock-induced and dirty-read-induced edges uniformly. Section~\ref{sec:protocol-desc} will describe the detailed protocol we build that achieves this goal. 

\vspace{.05in}
\noindent\textbf{Challenge 2: Cascading Aborts}
\vspace{.05in}

Allowing a transaction to read dirty data may lead to cascading aborts, as pointed out in multiple previous protocols~\cite{orleans-txn, larson11}. Specifically, if \txn{2} reads \txn{1}'s update before \txn{1} commits, a commit dependency between the two transactions is established --- \txn{2} is able to commit only if \txn{1} has successfully committed. If \txn{1} decides to abort (e.g., due to conflicts or integrity violation) then all the transactions that have commit dependencies on \txn{1} must also abort; this includes \txn{2} and all the transactions that have read \txn{2}'s dirty writes and so forth. This means potentially a long chain of transactions with commit dependencies need to cascadingly abort, causing waste of work and performance degradation. In Section~\ref{sec:cascade}, we present a deeper analysis of cascading aborts.

\subsection{Protocol Description}
\label{sec:protocol-desc}


This section describes the basic \name protocol in detail. 
In particular, we focus on addressing the first challenge in Section~\ref{sec:challenges} (i.e., dependency tracking). 
\name is developed based on \ww (cf. \secref{sec:2pl}); our description mainly focuses on the differences between the two. 
We firstly describe the new data structures \name requires to track dependencies of dirty reads,
followed by a detailed description of the pseudocode of the protocol.


\subsubsection{Data Structures}

All edges in the dependency graph of a conventional 2PL protocol are lock-induced edges. They are captured by locks and maintained in lock entries of individual tuples. 
For \name, we try to uniformly handle both lock-induced and dirty-read-induced edges by adding extra metadata into each lock entry and transaction.


\begin{figure}[t]%
    \centering
    \includegraphics[width=\linewidth]{./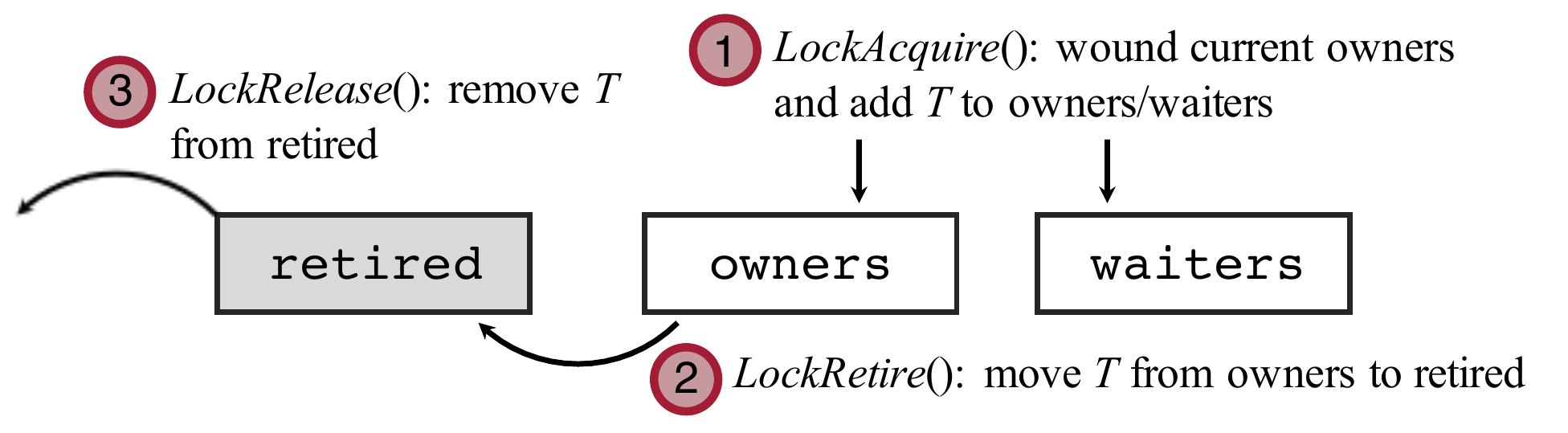} 
    \caption{\textbf{A lock entry in \name} --- \normalfont{Transaction \txn{} moves between lists (i.e., \owners, \waiters, and \retired) through function calls. The \retired list does not exist in baseline \ww.}} 
    \label{fig:data-structure}%
\end{figure}


\textbf{\texttt{tuple.\retired:}}
\name adds a new list called \retired in each lock entry next to the existing lists of \owners and \waiters, as shown in \figref{fig:data-structure}. \retired is sorted based on the timestamps of transactions in it. 
After a transaction has finished updating a tuple, the transaction can be moved from \owners to \retired. This allows other transactions to join \owners and read the dirty updates of the retired transactions. By maintaining the \retired list,
a dirty-read dependency can be captured in the lock entry --- if a retired transaction \txn{} has an exclusive lock, then all transactions in \retired after \txn{} and all transactions in \owners depend on \txn{}. 
Adding \retired allows both lock-induced and dirty-read-induced dependencies to be maintained in the lock entry.

\textbf{\texttt{Transaction.commit\_semaphore:}}
\name uses a new variable \texttt{commit\_semaphore} to ensure that transactions with dirty-read dependencies commit in order. A transaction \txn{} increments its own semaphore when it conflicts with any transaction in \retired of any tuple. The semaphore is decremented only when the dependent transaction leaves \retired so that \txn{} becomes one of the leading non-conflicting transactions in \retired. \revised{The semaphore is implemented using a 
64-bit integer for each transaction and incremented/decremented through atomic operations. The number of accesses to the semaphore is bounded by the number of tuple accesses of 
a transaction. The overhead of the semaphore is within 0.2\% of the total execution time with 120 threads under a high-contention workload. Details of how this variable is operated will be discussed in the concrete protocol. }

\subsubsection{Locking Logic}

\begin{algorithm}[t!]
\setstretch{0.9}
\small
\nonl\codeComment{req\_type is SH or EX} \\
\textit{LockAcquire(txn, req\_type1, tuple1)} \\
\nonl... \\
\nonl\codeComment{Lock can retire after the last write to the tuple} \\
\hl{\textit{LockRetire(txn, tuple1)}} \\
\textit{LockAcquire(txn, req\_type2, tuple2)} \\

\nonl... \\

\nonl\codeComment{Wait for transactions that txn depends on} \\
\While{\hl{txn.commit\_semaphore $\not=$ 0}}{\hl{pause}}
\textit{\revised{if(!abort)} writeLog()} \codeComment{Log to persistent storage device}\\

\textit{LockRelease(txn, tuple1, is\_abort)} \\
\textit{LockRelease(txn, tuple2, is\_abort)} \\
\textit{txn.terminate(is\_abort)}

\caption{\textbf{A transaction's lifecycle in \name} --- Differences between \name and \ww are highlighted in \hl{gray}.}
\label{alg:lifecycle}

\end{algorithm}


\algoref{alg:lifecycle} shows the lifecycle of how the database executes a transaction with \name. 
It largely remains the same as a conventional \ww 2PL protocol, with the differences highlighted using gray background color. For the basic protocol, we assume each transaction is assigned a timestamp when it first started similar to \ww; we will later optimize the timestamp assignment process in \secref{ssec:optimization}.

Different from conventional 2PL, \name allows a transaction to immediately retire a tuple that will not be written again by the transaction (line 2); the transaction can still read the tuple since \name keeps a local copy of the tuple for each read request. The transaction can acquire more locks on other tuples after retiring some lock (line 3). After the execution finishes, a transaction must wait for its \texttt{commit\_semaphore} to become 0 
before it can start committing. 
The transaction then moves forward to perform logging (line 6) and release locks (lines 7--8). Finally, the transaction either commits or aborts depending on the execution result. If an abort occurs during the transaction execution, the transaction directly jumps to line 7 to release locks. 

\begin{algorithm}[t!]
\setstretch{0.9}
\small
\nonl\textit{tuple.retired} \codeComment{List of retired transactions \revised{ordered} by ascending timestamp order} \\
\nonl\textit{tuple.owners} \codeComment{List of owners of the lock} \\
\nonl\textit{tuple.waiters} \codeComment{List of transactions waiting for the lock, sorted by ascending timestamp order}\\
\nonl\codeComment{Each list above contains \{txn, type\} where type is either SH or EX}\\

\SetKwProg{myfun}{Function}{}{}
\nonl\texttt{\\}
\myfun{LockAcquire(txn, req\_type, tuple)}{
    \textit{has\_conflicts = false} \\
    \For{(t, type) in \hl{concat(tuple.retired,} tuple.owners\hl{)}}{
        \If {conflict(req\_type, type)}{
            \textit{has\_conflicts = true}
        }
        \If{has\_conflicts \textbf{and} txn.ts $<$ t.ts}{
            \textit{t.set\_abort()}
        }
    }
    \textit{tuple.waiters.add(txn)} \\
    \textit{PromoteWaiters(tuple)}
}

\nonl\texttt{\\}
\nonl \codeComment{move txn from tuple.owners to tuple.retired} \\
\myfun{\hl{LockRetire(txn, tuple)}}{ \label{alg:retire}
    \hl{\textit{tuple.owners.remove(txn)}} \\
    \hl{\textit{tuple.retired.add(txn)}} \\
    \hl{\textit{PromoteWaiters(tuple)}}
}

\nonl\texttt{\\}
\myfun{LockRelease(txn, tuple, is\_abort)}{ \label{alg:release}
    \hl{\textit{all\_owners = tuple.retired $\cup$ tuple.owners}}\\
    \If{\hl{is\_aborted \textbf{and} txn.getType(tuple) == EX}}{ 
        \hl{\textit{abort all transactions in all\_owners after txn}}
    }
    \textit{remove txn from \hl{tuple.retired or} tuple.owners} \\
    \If{\hl{txn was the head of tuple.retired \textbf{and} conflict(txn.getType(tuple), tuple.retired.head)}}{
        \nonl\codeComment{heads: leading non-conflicting transactions}\\
        \nonl\codeComment{Notify transactions whose dependency is clear} \\
        \For{\hl{t in all\_owners.heads}}{
            \hl{\textit{t.commit\_semaphore$--$}}
        }        
    }
    \textit{PromoteWaiters(tuple)}
}

\nonl\texttt{\\}
\myfun{PromoteWaiters(tuple)}{
    \For{t in tuple.waiters}{
        \If{conflict(t.type, tuple.owners.type)}{
            break \\
        }
        \textit{tuple.waiters.remove(t)} \\
        \textit{tuple.owners.add(t)} \\
        \If{\hl{$\exists$(t', type) $\in$ tuple.retired s.t. conflict(type, t.type)}}{
            \hl{\textit{t.commit\_semaphore} ++ }
        }
    }
}

\caption{\textbf{Function calls in \name }
--- Difference between \name and \ww is highlighted in \hl{gray}.}
\label{alg:protocol}

\end{algorithm}

\algoref{alg:protocol} shows the detailed implementation of the functions in \name, i.e., \textit{LockAcquire()}, \textit{LockRetire()}, \textit{LockRelease()}, as well as an auxiliary function \textit{PromoteWaiters()} which is called by the other three functions. \revised{Note that the first three functions are in critical sections protected by latches, same as other 2PL protocols.}

The baseline \ww is the algorithm in \algoref{alg:protocol} ignoring the code in gray. \name adds extra logic to \textit{LockAcquire()} and \textit{LockRelease()}, and adds a new function \textit{LockRetire()}. In the following, we walk through these functions step-by-step.


\vspace{.05in}
\textbf{\textit{LockAcquire()}}
\vspace{.05in}

As discussed in Section~\ref{sec:2pl}, in \ww when a conflict occurs, the requesting transaction would abort (i.e., \texttt{wound}) current owners that have a bigger timestamp than the requesting transaction (lines 2--7). 
In \name, we need only a small change which is to wound transactions in \textit{both} \owners and \retired (line 3). 
After some or all of the current owners are aborted, some waiting transaction(s) or the requesting transaction may become the new owner. 
In the pseudo code, for brevity, we show this logic by always adding the requesting transaction to the waiter list (line 8) and then try to promote transactions with small timestamps from \waiters to \owners (by calling \textit{PromoteWaiters()} in line 9). In our actual implementation, unnecessary movement between \waiters and \owners are avoided. 

Inside \textit{PromoteWaiters()}, the algorithm scans \waiters in the growing timestamp order (line 24). For each transaction that does not conflict with the current owner(s) (line 25), it is moved from \waiters to \owners (lines 27--28). Otherwise the loop breaks when the first conflict is encountered (line 26). 
In \name, we also need to increment the \textit{commit\_semaphore} for each transaction that just became an owner if it conflicts with any transaction in \retired (lines 29--30). This allows the transaction to be notified when transactions that it depends on have committed. 

\vspace{.05in}
\textbf{\textit{LockRetire()}}
\vspace{.05in}

This function simply moves the transaction from \owners to \retired of the tuple (lines 11--12). It then calls \textit{PromoteWaiters()} to potentially add more transactions to \owners (line 13). It is important to note that the \textit{LockRetire()} function call is completely optional. 
If the function is never called for all transactions, then \name degenerates to \ww. \name also allows any particular transaction to choose whether to call \textit{LockRetire()} on any particular tuple. Such compatibility allows the system to choose between \name and \ww at a very fine granularity. 

\vspace{.05in}
\textbf{\textit{LockRelease()}}
\vspace{.05in}

In \ww, releasing a lock simply means removing the transaction from \owners of the tuple (line 18) and promoting current waiters to become new owners (line 22). In \name, the \textit{LockRelease()} function does more work: (1) handling cascading aborts and (2) notifying other transactions when their dependency is clear. 

Specifically, we define a list \texttt{all\_owners} to be the concatenation of \retired and \owners (line 15). If the releasing transaction decides to abort and its lock on the tuple has type \textit{EX}, then all the transactions in \texttt{all\_owners} after the releasing transaction must abort cascadingly. In \name, these transactions will be notified abort; the abort logic will be later performed by the corresponding worker threads. Note that if the aborting transaction locks the tuple with type \textit{SH}, then cascading aborts are not triggered --- an \textit{SH} lock has no effect on the following transactions. 

The algorithm then removes the transaction from \retired or \owners depending on where it resides (line 18). If the removed transaction was the old head of \retired and has a lock type that conflicts with the new head of \retired (line 19), then the algorithm notifies all the current leading non-conflicting transactions in \retired (i.e., \texttt{heads} of \retired) that their dependency on this tuple is clear, by decrementing their corresponding \textit{commit\_semaphore} (lines 29--30).

\subsection{Deciding Timing for Lock Retire} \label{ssec:decide_retire}
\revised{ 
In principle, every write can be immediately followed by $\textit{lock\_retire()}$ without affecting correctness. If a transaction writes a tuple for a second time after retiring the lock, it can still ensure serializability by simply aborting all transactions that have seen its first write. Performance will not be affected if each transaction updates each tuple only once. 

For better performance, a lock can be retired after the transaction's last write to the tuple if the tuple may be updated more than once by the same transaction. To determine where the last write is, \name can rely on \textit{programmer annotation} or \textit{program analysis} to find the last write and insert $\textit{lock\_retire()}$ after it. In this section, we discuss the latter approach. 
}
Determining the last write to a tuple can be challenging since the position may depend on the query parameters or tuples accessed earlier in the transaction. 
We illustrate the challenge using a transaction snippet shown in~\lstref{code:example-before}, where $op_1$ and $op_2$ (line 1 and 5) both work on some tuple from the same table {\code table1}. In the example, ideally we want to add {\code LockRetire()} immediately after $op_1$, yet we cannot be sure whether the later operation $op_2$ will execute and access the same tuple or not at the desired retire point. To solve this challenge, \system synthesizes a condition to add to the transaction program that dynamically decides whether to retire a lock. The process is described below.



{\bf Program analysis.} \system first performs standard control and data flow analysis~\cite{dataflow-book} to obtain all the control and dataflow dependencies in the transaction program. It inlines all functions to perform inter-procedural analysis, and constructs a single dependency graph for each transaction. 

{\bf Identify queries.} \system next identifies every tuple access by recognizing the database query API calls. It analyzes the query (e.g., the SQL query string passed to the API call) as well as parameters to understand the table involved and the variable that stores the key of the tuple being accessed. We assume most queries access a single tuple by primary key and \system can check potential tuple-level re-access using the key. For non-key-access queries we assume it touches all tuples and detect table-level re-access.

{\bf Synthesizing retire condition.} If an operation $op$ works on a table that is no longer accessed after $op$ in the transaction, then the lock in $op$ can safely retire. Otherwise, \system will synthesize a condition to decide whether to retire the lock. This condition checks any later access will be executed and touching the same tuple as $op$, where an example is shown on line 3 in~\lstref{code:example-after}. In the condition, the lock in $op_1$ can safely retire either when {\code cond} evaluates to false, which means the later access $op_2$ will not happen, or when {\code cond} evaluates to true but the key of {\code tup1} and {\code tup2} are not equal, which means $op_2$ will happened but not touch the same tuple.

To generate such condition, the value of {\code cond} and the key of {\code tup2} must be computed before the {\code LockRetire()} call. To do so, \system traces the data source along the data dependency path of {\code cond} and the key, and then moves any computation on the path that happens later than $op_1$ to an early position, without changing the program semantic. Internally, \system tries to move the computation to every position later than $op_1$, and stops when it finds the earliest one where all the data dependencies hold after the movement. Then it adds the synthesized condition as well as the {\code LockRetire()} call to a position after the computation.
For instance, line 3 in~\lstref{code:example-before} originally computes the key of {\code tup2} late in the transaction. \system moves the computation of {\code tup2.key} to line 2 in~\lstref{code:example-after}, the earliest position after $op_1$ where no data dependency is violated.

\begin{lstlisting}[language=c, basicstyle=\scriptsize\ttfamily,
        escapeinside={@}{@}, numbers=left, caption={A transaction program snippet}, label={code:example-before}]
LockAcquire(table1, tup1, EX); // op1
... // other queries and computations
tup2.key = f(input);
if (cond)
	LockAcquire(table1, tup2, EX); // op2
\end{lstlisting}

\begin{lstlisting}[language=c, basicstyle=\scriptsize\ttfamily,
        escapeinside={@}{@}, numbers=left, caption={An example of synthesized condition from~\lstref{code:example-before}}, label={code:example-after}]
LockAcquire(table1, tup1, EX); // op1
tup2.key = f(input);
if (!cond || (cond && tup1.key!=tup2.key)) //synthesized
	LockRetire(table2, tup2)
... // other queries and computations
if (cond)
	LockAcquire(table1, tup2, EX); // op2
\end{lstlisting}

{\bf Handling loops.} \system performs loop fission to allow synthesizing condition for lock retire in loops.~\lstref{code:loop-before} shows an example. Because the keys used in later accesses are computed in later loop iterations, \system breaks the loop into two parts, as shown in~\lstref{code:loop-after}, where the first loop computes all the keys and the second loop executes the tuple accesses. \system then adds a nested loop to produce the retire condition, as shown from lin 6-9. This nested loop checks the keys of the rest of the iterations and sets the variable added by \system, {\code can\_retire}, if any later key is the same as the key in the current iteration.  
\system only handles {\code for} loops where the number of iteration is fixed (i.e., not changed inside the loop). For other types of loop, we do not retire locks inside the loop for now and leave it as future work.

\begin{lstlisting}[language=c, basicstyle=\scriptsize\ttfamily,
        escapeinside={@}{@}, numbers=left, caption={A transaction snippet involving for loop}, label={code:loop-before}]
for(i=0; i<input1; i++) {
  key[i] = f(input2[i]);
	tup.key = key[i];
  LockAquire(table, tup, EX);
}
\end{lstlisting}

\begin{lstlisting}[language=c, basicstyle=\scriptsize\ttfamily,
        escapeinside={@}{@}, numbers=left, caption={An example of synthesized condition from~\lstref{code:loop-before}}, label={code:loop-after}]
for(i=0; i<input1; i++) 
  key[i] = f(input2[i]);
for(i=0; i<input1; i++) {
  tup.key = key[i];
  LockAquire(table, tup, EX);
  bool can_retire = true;
	for(j=i+1; j<input1; j++)
    can_retire &&= (key[j]!=tup.key);
  if (can_retire) LockRetire(table, tup);
}
\end{lstlisting}

\vspace{-.1in}
\subsection{Discussions}

This section discusses a few other important aspects of \name. 
An important feature of \name is the strong compatibility with \ww, meaning an existing database can be extended to use \name without a major rewrite. Many important design aspects can be directly inherited from 2PL. 

\textbf{Fault Tolerance:} \name does not require special treatment of logging. As shown in \algoref{alg:lifecycle}, a transaction 
\revised{does not log its commit record until it has satisfied the concurrency control protocol.}
This is  similar to conventional 2PL which logs in the same way.

\textbf{Phantom Protection:} Phantom protection~\cite{eswaran76} in \name uses the same mechanism as in other 2PL protocols, namely, \textit{next-key locking}~\cite{mohan1989aries} in indexes; this technique achieves the same effect as \textit{predicate locking} but is more widely used in practice. In this context, lock retiring can also be applied to inserts or deletes to an index in the same way as read/writes to tuples. 

\textbf{Weak Isolation:} \name can support isolation levels weaker than serializability. For example, \textit{repeatable read} is supported by giving up phantom protection; \textit{read committed} (RC) is supported by releasing shared locks early. \revised{For RC, \name needs to retire only writes since read locks are always immediately released}. Finally, \textit{read uncommitted} means each retire becomes a release. 

\textbf{Other Variants of 2PL:} To ensure deadlock freedom, \name permits \txn{2} to read \txn{1}'s dirty write only if such a dependency edge is  permitted in the underlying 2PL protocol. \name can be extended to any variants of 2PL but some variants fit better than others. \waitd, for example, allows only older transactions to wait for younger transactions. When applying retiring and dirty reads to this setting, the older transactions are subject to cascading aborts, meaning an unlucky old transaction may starve and never commit. Such problems do not exist in \ww.

\textbf{Compatibility with Underlying 2PL:}
As we pointed out in \secref{sec:protocol-desc}, it is possible to smoothly transition between \name and the underlying 2PL protocol. Specifically, the \textit{LockRetire()} function call is completely optional for any transaction on any tuple. When it is not called, dirty reads are disabled for the particular lock and the system behavior degenerates to 2PL. This allows a system to dynamically turn on/off the dirty read optimization of \name based on the performance and frequency of cascading aborts. 

\revised{
\textbf{Opacity:} Opacity is a property that reads must be consistent even before commit. By definition, ensuring opacity means a transaction is not allowed to read uncommitted data. 
If opacity is required for a transaction, \name can enforce it by running the transaction in Wound-Wait (i.e., wait on a tuple until the retired and owners lists are empty). Note that some production systems~\cite{orleans-txn, larson11, diaconu13, corbett12, malkhi2013spanner} also do not support further optimizations for transactions with opacity. }
\vspace{-.05in}
\subsection{Optimizations} \label{ssec:optimization}

Here we introduce four optimizations for \name --- 
the first two are to reduce extra overheads and the rest are to reduce aborts. \revised{These ideas are not entirely new but we discuss them here since they can substantially improve the performance of \name. We also apply them to baseline protocols when applicable.} 

\textbf{Optimization 1: No extra latches for read operations.} In \name, read operations retire automatically in \texttt{LockAcquire()}. We keep a local copy for every new read \revised{unless the data is written by the transaction itself}. A read operation can be moved to the retired list directly whenever it can become the owner. This optimization requires no extra latches for retiring and will not cause more aborts since aborting a transaction holding a read lock does not cause cascading aborts. 
\revised{Although copying may incur extra overhead, existing work shows such overhead scales linearly with the core count~\cite{xie2015high} and the overhead is less than 0.1\% of the runtime with 120 cores under high contention. With the optimization, the cost of accessing extra latch is within 0.8\% of the total execution time with the same setting.}

\revised{\textbf{Optimization 2: No retire when there is no benefit.} 
Write operations do not need to retire if they bring little benefit but increase the chances of cascading abort or the overhead of acquiring latches; read operations can always safely retire since they cannot cause cascading effects. Different heuristics can be used to decide which writes may or may not be retired. 
We use a simple heuristic where writes in the last $\delta$ ($0\leq\delta\leq1$) fraction of accesses are not retired. The intuition is that hotspots at the end of a transaction should not cause long blocking and retiring them has no benefit. 
However, if a transaction turns out to spend significant time (i.e., longer than $\delta$ of the total execution time) waiting on the \texttt{commit\_semaphore}, we will retire those write operations at the end of a transaction. 
}


\textbf{Optimization 3: Eliminate aborts due to read-after-write conflicts.} In the basic \name protocol, when a transaction tries to acquire a shared lock, it needs to abort all the write operations of low-priority transactions in both the retired list and the owner list. We observed that such aborts are unnecessary. As \name keeps local copies for reads and writes, such a new read-operation can read the local copy of other operations without triggering aborts. 
\revised{The optimization naturally fits \name as it allows for read-modify-write over dirty data and multiple uncommitted updates can exist on a tuple. However, the idea cannot be easily applied to existing 2PL as reading uncommitted data is not allowed hence there exists only one copy of the data. There are existing works~\cite{larson11, sadoghi2014} sharing similar ideas that also allow a transaction to choose which version to read as they allow reading uncommitted data under certain scenarios, though they typically support up to one uncommitted version. 
} 



\textbf{Optimization 4: Assign timestamps to a transaction on its first conflict.} 
\revised{Here we explain how to extend \name to support an existing idea, dynamic timestamp assignment, to avoid aborting on the first conflict.} The pseudocode is shown in \algoref{alg:dynamic_ts}. Specifically, lines 1--5 in \algoref{alg:dynamic_ts} are inserted to the beginning of function \textit{LockAcquire()} in \algoref{alg:protocol}. If the incoming transaction conflicts with any other transaction in \retired, \owners, or \waiters (line 1--2), we assign timestamps for all transactions in the three lists, in the order specified by the algorithm (lines 3--4), and then assign timestamp for the incoming transaction (line 5). Timestamp assignment is a single compare\_and\_swap() call (lines 6--8). \revised{\name may be further improved through more complex timestamping strategies~\cite{lomet2012multi}.}
\begin{algorithm}[t!]
\small
\textit{all\_txns = concat(tuple.retired, tuple.owners, tuple.watiers)}\\
\If{txn conflicts with any transaction in all\_txns}{
    \For{t in all\_txns}{
        \textit{set\_ts\_if\_unassgined(t)}
    }
    \textit{set\_ts\_if\_unassigned(txn)} 
}

\SetKwProg{myfun}{Function}{}{}
\nonl\texttt{\\}
\myfun{set\_ts\_if\_unassigned(txn)}{
    \If{txn.ts = UNASSIGNED}{
        atomic\_compare\_and\_swap(\&txn.ts, UNASSIGNED, atomic\_add(global\_ts))
    }
}
\caption{\textbf{Support for dynamic timestamp assignment} --- lines 1--5 are added to the beginning of function \textit{LockAcquire()} in \algoref{alg:protocol}.}
\label{alg:dynamic_ts}
\end{algorithm}

\subsection{Proof of Correctness} \label{sec:proof}
According to the serializability theory, a schedule of transactions is serializable if and only if the serialization graph (SG)\footnote{\scriptsize A serialization graph (SG) is a directed graph whose nodes are committed transactions and whose edges represent conflicts among pairs of transactions.} is acyclic~\cite{bernstein1987concurrency, bernstein79}. In the following, we will first revisit how serializability was proved for 2PL and then develop a similar proof for \name. 
\vspace{-.05in}
\begin{property}\label{property:2pl}
\textbf{[Two-Phase Rule]} A transaction does not acquire more locks once it has released a lock. 
\end{property}
\vspace{-.05in}
Property~\ref{property:2pl} describes the basic rule that all 2PL protocols must follow. This property together with the behavior of locking ensures that all 2PL schedules of committed transactions are serializable. 
\vspace{-.05in}
\begin{theorem}\label{thm:2pl-sr}
Every schedule in 2PL is serializable.
\end{theorem}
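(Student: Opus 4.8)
The plan is to invoke the serialization-graph characterization stated just above the theorem: since a schedule is serializable iff its serialization graph (SG) is acyclic, it suffices to show that every 2PL schedule has an acyclic SG. I would argue by contradiction, assuming the SG contains a cycle $T_1 \to T_2 \to \cdots \to T_k \to T_1$, and deriving an impossible ordering of the transactions' \emph{lock points}.

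First I would establish the bridge between an SG edge and the locking timeline. An edge $T_i \to T_j$ means that $T_i$ and $T_j$ issue conflicting operations on some common data item $x$, with $T_i$'s operation ordered before $T_j$'s. By the first locking rule (conflicting locks cannot be held simultaneously on the same data), $T_i$ must release its lock on $x$ strictly before $T_j$ acquires its conflicting lock on $x$. This yields, for every edge, a concrete temporal fact: a release by $T_i$ precedes an acquisition by $T_j$.

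Next I would bring in Property~\ref{property:2pl}. For each transaction $T$, define its lock point $\mathrm{LP}(T)$ as the instant of its last lock acquisition. The two-phase rule guarantees that every acquisition by $T$ precedes every release by $T$, so any release of $T$ occurs at or after $\mathrm{LP}(T)$, while any acquisition of $T$ occurs at or before $\mathrm{LP}(T)$. Applying this to the edge $T_i \to T_j$: the release by $T_i$ on $x$ sits at or after $\mathrm{LP}(T_i)$, the acquisition by $T_j$ on $x$ sits at or before $\mathrm{LP}(T_j)$, and the former precedes the latter; chaining these inequalities gives $\mathrm{LP}(T_i) < \mathrm{LP}(T_j)$. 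Thus every SG edge strictly increases the lock point.

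Finally, applying this fact around the assumed cycle would force $\mathrm{LP}(T_1) < \mathrm{LP}(T_2) < \cdots < \mathrm{LP}(T_k) < \mathrm{LP}(T_1)$, an impossibility. Hence no cycle can exist, the SG is acyclic, and the schedule is serializable by the cited theorem. I expect the main obstacle to be the first step: rigorously arguing that a conflict edge in the SG forces the unlock-before-lock ordering on the shared item, handling the read/write, write/read, and write/write conflict cases uniformly and correctly accounting for the direction the edge imposes. Once that lemma is in place, the two-phase chaining and the cycle contradiction are routine.
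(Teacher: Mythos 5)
Your proposal is correct and follows essentially the same route as the paper's proof: both reduce an SG edge $T_i \rightarrow T_j$ to the fact that $T_i$ releases a lock on the conflicting item before $T_j$ acquires one, then chain these via the two-phase rule around an assumed cycle to force a transaction to acquire a lock after releasing one. Your ``lock point'' formulation is just a cleaner bookkeeping device for the same chaining argument the paper carries out directly on the release/acquire events.
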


\begin{proof}
\vspace{-.05in}
If the serialization graph contains an edge $T_i \rightarrow T_j$, then the two transactions must have a conflict on some tuple $x$, on which $T_j$ acquired a lock after $T_i$ released a lock. 

A graph with a cycle must contain a path that starts and ends with the same transaction, e.g., $T_i \rightarrow T_j  \rightarrow ... \rightarrow T_i$. This means $T_i$ has released a lock before $T_j$ acquires a lock which happens before $T_j$ releases a lock (according to two-phase rule). All of these happen before $T_i$ acquires a lock (according to the last edge). However, this means $T_i$ acquires a lock after it has released a lock, violating the two-phase rule. 
\end{proof}

With \name, we cannot prove for serializability following the exact same proof above. Since for \name, an edge $T_i \rightarrow T_j$ does not imply $T_i$ releasing the lock before $T_j$ acquiring the lock. We introduce the concept of \textit{commit point} to conduct the proof.

\begin{definition}\label{def:commit-point}
\textbf{[Commit Point]} A transaction's commit point is a point in time between the transaction completes all of the operations and recorded the operations in log. (In Algorithm~\ref{alg:lifecycle}, the commit point is after finishing line 6 but before starting line 7.) 
\end{definition}

\begin{lemma}\label{lemma:commit-point}
\textbf{[Commit Point Ordering]} In \name, if the serialization graph contains $T_i \rightarrow T_j$, then $T_j$ reaches the commit point after $T_i$. 
\end{lemma}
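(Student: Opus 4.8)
The plan is to fix an arbitrary edge $T_i \rightarrow T_j$ of the serialization graph. By definition this edge comes from a pair of conflicting operations on some tuple $x$ with $T_i$'s operation ordered before $T_j$'s, and both transactions commit. I would prove the ordering directly from two timing invariants read off the pseudocode. First, a transaction is deleted from \owners or \retired of any tuple only inside \textit{LockRelease()}, which Algorithm~\ref{alg:lifecycle} invokes at lines~7--8, i.e. strictly after that transaction's commit point (Definition~\ref{def:commit-point} places the commit point after line~6 but before line~7). Second, a transaction passes its commit point only after its \texttt{commit\_semaphore} has reached $0$, since the loop that waits on the semaphore runs before the logging of line~6.

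I would then observe that conflicting lock holders never coexist in \owners: \textit{PromoteWaiters()} halts at the first conflict, so all owners hold mutually compatible modes. Consequently, at the instant $T_j$ issues its conflicting access to $x$, the transaction $T_i$ must be in one of exactly two states: (A) already removed from the lock entry of $x$, or (B) still present but sitting in \retired holding a mode that conflicts with $T_j$'s. These correspond to the lock-induced and dirty-read-induced edges of Section~\ref{sec:challenges}; in particular every dirty read falls under (B), because a reader of dirty data joins \owners precisely while the writer sits in \retired.

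For case (A) the argument is immediate: $T_j$ can acquire its conflicting lock only after $T_i$ has released $x$, and that release occurs after $T_i$'s commit point; since $T_j$'s commit point in turn follows its access to $x$, it follows $T_i$'s commit point. For case (B) I would invoke the semaphore. When $T_j$ becomes an owner of $x$ while the conflicting $T_i$ is in \retired, \textit{PromoteWaiters()} (lines~29--30) increments $T_j.\texttt{commit\_semaphore}$, and the matching decrement is issued only inside a \textit{LockRelease()} that removes a conflicting head of \retired. Hence $T_j$'s semaphore cannot return to $0$ until $T_i$ has left \retired, which happens after $T_i$'s commit point, so the wait keeps $T_j$ short of its own commit point until then. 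I would also record the auxiliary fact that case (B) forces $\mathit{ts}(T_i) < \mathit{ts}(T_j)$: were it otherwise, $T_j$ would have wounded the larger-timestamp $T_i$ during \textit{LockAcquire()}, contradicting that $T_i$ commits.

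The main obstacle I anticipate is the semaphore bookkeeping in case (B): I must show that the increment attributable to $T_i$ is not cancelled before $T_i$ genuinely departs \retired. Because the decrement in \textit{LockRelease()} fires only for the new \emph{leading non-conflicting} transactions when a head of \retired leaves, I would argue, using the timestamp-sorted order of \retired together with $\mathit{ts}(T_i) < \mathit{ts}(T_j)$, that $T_j$ cannot become such a leading transaction while the conflicting $T_i$ still precedes it. I would also handle the case where $T_i$ and $T_j$ conflict on several tuples, noting that each such tuple only adds to $T_j$'s semaphore and therefore cannot weaken the conclusion. With this invariant established, cases (A) and (B) combine at once to give the claimed commit-point ordering, which in turn yields acyclicity of the serialization graph and hence serializability.
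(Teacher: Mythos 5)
Your proof is correct and follows essentially the same route as the paper's: $T_j$ passes its commit point only once its \texttt{commit\_semaphore} is zero, and that cannot happen (nor, in your case (A), can the conflicting lock even be acquired) until $T_i$ has executed \textit{LockRelease()}, which Algorithm~\ref{alg:lifecycle} places after $T_i$'s commit point. The paper compresses your cases (A) and (B) into a single sentence and omits the semaphore bookkeeping and timestamp-ordering details you verify, so your version is a more careful elaboration of the same argument rather than a different one.
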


\begin{proof}
Without loss of generality, we assume $T_i$ and $T_j$ conflict on tuple $x$. 
We know that $T_j$ can reach its commit point only after $T_j.$\textit{commit\_semaphore} becomes 0 (lines 4--5 in Algorithm~\ref{alg:lifecycle}). 
According to Algorithm~\ref{alg:protocol}, $T_j$\textit{.commit\_semaphore} can become $0$ only after $T_i$ has released its lock on $x$, which can happen only after $T_i$ has reached its commit point (according to Algorithm~\ref{alg:lifecycle}). Together, this means $T_j$ reaches its commit point after $T_i$. 
\end{proof}

\begin{theorem}\label{thm:2pl-sr}
Every schedule in \name is serializable.
\end{theorem}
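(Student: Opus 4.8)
The plan is to mirror the structure of the proof of Theorem~\ref{thm:2pl-sr} for plain 2PL, but replace the role of the two-phase rule with the Commit Point Ordering lemma (Lemma~\ref{lemma:commit-point}) that has just been established. The key observation is that Lemma~\ref{lemma:commit-point} gives exactly the monotonicity we need along edges of the serialization graph: every edge $T_i \rightarrow T_j$ forces $T_j$ to reach its commit point strictly after $T_i$. This is the analogue of ``lock release precedes lock acquire'' in the original 2PL argument, and it is phrased in terms of commit points rather than lock operations precisely because, under \name, an edge no longer implies a clean release-before-acquire ordering on the locks.

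First I would argue by contradiction and suppose the serialization graph contains a cycle $T_{i_1} \rightarrow T_{i_2} \rightarrow \cdots \rightarrow T_{i_k} \rightarrow T_{i_1}$. Applying Lemma~\ref{lemma:commit-point} to each edge in turn, I obtain that $T_{i_2}$ reaches its commit point after $T_{i_1}$, that $T_{i_3}$ reaches its commit point after $T_{i_2}$, and so on around the cycle, finally concluding that $T_{i_1}$ reaches its commit point after $T_{i_k}$. Chaining these together yields that $T_{i_1}$ reaches its commit point strictly after itself, which is impossible. Hence the serialization graph is acyclic, and by the serializability theorem cited at the start of \secref{sec:proof} (a schedule is serializable iff its SG is acyclic), every schedule in \name is serializable.

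One subtlety I would be careful to address is that the serialization graph is defined over \emph{committed} transactions, so cascading aborts and dirty-read edges do not, on their own, threaten the argument: an edge $T_i \rightarrow T_j$ in the SG only concerns transactions that both ultimately committed, and for those the commit-point ordering of Lemma~\ref{lemma:commit-point} applies. I would also note that the commit points are well-defined instants in time (Definition~\ref{def:commit-point}), so the ``after'' relation among them is a strict total order on a finite set of transactions; this is what makes the cyclic chain of strict inequalities a genuine contradiction rather than merely a tie.

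The main obstacle is not the cycle-chasing itself, which is a direct transcription of the 2PL proof, but rather making sure that Lemma~\ref{lemma:commit-point} truly applies to \emph{every} kind of SG edge that \name can produce. The proof of the lemma was stated for a single conflict on a tuple $x$, but \name introduces two flavors of dependency---conventional lock-induced edges and the new dirty-read-induced edges---and I would want to confirm that the lemma's reasoning (that $T_j$'s semaphore cannot drop to zero until $T_i$ has released its lock, which in turn happens only after $T_i$'s commit point) covers both flavors uniformly. Since the \texttt{commit\_semaphore} machinery in \algoref{alg:protocol} is precisely what captures the dirty-read dependencies in the same lock entry as the lock-induced ones, I expect this to hold, but it is the step that genuinely relies on the new protocol rather than on generic 2PL structure, so I would treat it as the crux to verify before declaring the theorem proved.
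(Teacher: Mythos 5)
Your proof is correct and follows essentially the same route as the paper: both apply Lemma~\ref{lemma:commit-point} to every edge of the serialization graph and conclude that a cycle would force a transaction to reach its commit point after itself, which is impossible. Your additional remarks (that the SG only involves committed transactions, and that the lemma must cover both lock-induced and dirty-read-induced edges uniformly) are sound elaborations of the same argument rather than a different approach.
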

\begin{proof}
According to Lemma~\ref{lemma:commit-point}, every edge $T_i \rightarrow T_j$ in the serialization graph means $T_j$ reaching the commit point after $T_i$. Therefore, no cycle may exist since a transaction cannot reach the commit point after it has already reached the commit point, finishing the proof.
\end{proof}

Note that the proof for \name above can also be used to prove for serializability of 2PL. The key here is that all transactions reach the commit point following the data dependency order. Even if a transaction $T_j$ reads dirty data from transaction $T_i$, $T_j$ can only reach the commit point after $T_i$ does. 

Similar to the original 2PL proof, the proof for \name only shows serializability for committed transactions. The fact that \name is deadlock-free follows the deadlock-freedom of \ww, the proof of which is beyond the scope of this paper. 
\section{Cascading Aborts}
\label{sec:cascade}
This section analyzes the effect of cascading aborts qualitatively and discusses an optimization that we propose to mitigate the effect.
\vspace{-.1in}
\subsection{Cases Inducing Cascading Aborts} \label{sec:abort-cases}

\textit{Cascading aborts} (also called \textit{cascading rollback}) is a situation where the abort of a transaction causes other transactions to abort. 
In \name, transactions can read uncommitted data, if the transaction that wrote the data aborts, all dependent transactions must also abort cascadingly. 
In our algorithm, the procedure of cascading aborts corresponds to the line 17 of \algoref{alg:protocol}. 

In \name, a transaction $T$ may abort in three cases: (1) when $T$ is wounded by another transaction with higher priority \revised{to prevent deadlocks}, (2) when a transaction that $T$ depends on aborts and $T$ aborts cascadingly, or (3) when $T$ self-aborts due to transactional logic (e.g., inventory level below 0) or user intervention. 
Since cases (1) and (3) can occur in the baseline \ww protocol as well, we focus on discussing the difference between case (2) and the other two cases. 

\vspace{-.15in}

\subsection{Effects of Cascading Aborts as a Trade-off of Reducing Blocking} \label{ssec:cascading-effect}

\revised{The effect of cascading aborts can be evaluated through three metrics --- length of abort chain, abort rate, and abort time. The \textit{length of abort chain} depicts the number of transactions that must abort cascadingly due to one transaction's abort; our empirical result shows the number can be as large as the number of concurrent transactions at high contention. The \textit{abort time} describes} the total CPU time wasted on executing transactions that aborted in the end.  The throughput of a transaction processing system is largely determined by the number of CPU cycles performing useful vs. useless work; \emph{abort time} is an example of such useless work. The other example is the time spent on waiting for a lock, which we defined as \emph{Wait Time}. Unlike the first two indicating the magnitude of the effect, the time measurements illustrate the tradeoff between waits and aborts in a more direct way. \revised{We use this metric to show the trade-off in the evaluation section.}

Compared to \ww, \name substantially reduces the \emph{wait time} but increases \emph{abort time}. Although \name may have more aborts than \ww due to cascading aborts, trading waits for aborts may be a good deal in many cases. 

\revised{There are two reasons why aborts may be preferred in certain cases.} First, with hotspots, all the transactions aborted cascadingly are those that have speculatively read dirty data. These transactions would have been waiting in \ww without making forward progress in the first place. A large portion of cycles wasted on cascading aborts in \name would also be wasted on waiting in \ww. Second, even if a transaction aborts, it warms up the CPU cache with accessed tuples, so subsequent executions become faster~\cite{kung1981optimistic, tu13}. We observe this effects on Silo, an OCC-based protocol, as described in \secref{sec:exp} --- Silo has higher abort rate than many 2PL protocols but also higher throughput.

We build a model based on previous theoretical analysis on 2PL~\cite{gray1992book, Gray1981ASM, bernstein2009principles} to illustrate when the benefits of \name outweigh its overhead. We define $K$ as the number of lock requests per transaction, $N$ as the number of transactions running concurrently, $D$ as the number of data items, and $t$ as the average time spent between lock requests. 
The throughput is proportional to $\frac{N}{(K+1)t}\times(1-AP_{\textit{conflict}}-BP_{\textit{abort}})$,
where $P_{\textit{conflict}}$ and $P_{\textit{abort}}$ denote the probability a transaction encounters a conflict and an abort, respectively; $A$ denotes the fraction of execution time that a transaction spends waiting given a conflict; $B$ denotes the fraction of time spent on aborted execution. 
\name (bb) can reduce $AP_{\textit{conflict}}$ (due to early retire) but increase $BP_{\textit{abort}}$ (due to cascading aborts). It has positive gains over Wound-Wait (ww) when the benefits outweigh the overhead. 

The gain in $AP_{\textit{conflict}}$ is $(A_{\textit{bb}}-A_{\textit{ww}}) P_{\textit{conflict}}$. $P_{\textit{conflict}}$ is a property of the workload and is approximately $NK^2/(2D)$ in both protocols. Specifically, as there are $N-1$ transactions running concurrently holding $NK/2$ locks on average, the probability of a single lock request conflicting with others is $NK/(2D)$ given a uniformly random distribution of accesses. Thus, the probability of a transaction encountering a conflict during its lifetime is $1 - (1 - NK/2D)^K \approx NK^2/(2D)$~\cite{gray1992book, Gray1981ASM}; 
$A_{\textit{bb}}$ is approximately $1/(K+1)$ (i.e., wait for only the duration of one access) and $A_{\textit{ww}}$ is on average $1/2$ (i.e., wait for half of the transaction execution time).

To model $BP_{\textit{abort}}$, we observe that \name and \ww share two common sources of aborts, i.e. aborts due to deadlock and user-initiated aborts. \name introduces another source of aborts due to cascading, represented as $B P_{\textit{cas\_abort}}$, where $P_{\textit{cas\_abort}}$ is the probability that a transaction aborts cascadingly. 
We calculate an upper bound of this cost. We can bound $B$ by $1$ and bound $P_{\textit{cas\_abort}}$ by $(1-P_{\textit{deadlock}})\times P_{\textit{conflict}} \times P_{\textit{deadlock}} \times (N-1)$ (i.e., the current transaction experiences a conflict while some other transaction experiences a deadlock), which is bounded by $NP_{\textit{conflict}}P_{\textit{deadlock}}$. The value of $P_{\textit{deadlock}}$ is approximately $NK^4/4D^2$, which is approximated by the probability of a transaction conflicting with a transaction that is already conflicted with it~\cite{gray1992book, Gray1981ASM}. 

Combining the above, \name has performance advantage when $(A_{\textit{ww}}-A_{\textit{bb}})P_{\textit{conflict}}>BP_{\textit{cas\_abort}}$, which is satisfied when $(\frac{1}{2}-\frac{1}{K+1})P_{\textit{conflict}} > NP_{\textit{conflict}}P_{\textit{deadlock}}$, which is satisfied when $\frac{N^2K^4}{2D^2} < \frac{K-1}{K+1}$. 
For most databases, the data size $D$ is orders of magnitude larger than $N$ and $K$; so the equation will hold. The high-level intuition here is that the probability of a deadlock is much lower than the probability of a conflict~\cite{gray1992book, Gray1981ASM}. \name optimizes for the common case by reducing the cost of a conflict and sacrifices performance of the conner case by increasing the cost of aborts during deadlocks. 

\revised{In \secref{sec:exp}, we performed quantitative evaluations on the impact of cascading aborts and the tradeoff between aborts and waits using the metrics described above. We will  show how the evaluation results corroborate the arguments and modeling here.}


\section{Experimental Evaluation}
\label{sec:exp}
This section evaluates the performance of \name. 
We first introduce the experimental setup in Section~\ref{ssec:exp-setup}, followed by demonstrations of the performance of \name without cascading aborts in Section~\ref{ssec:exp-1hs}. In \secref{ssec:exp-cascading}, we evaluate \name under different scenarios to understand the effect of cascading aborts. We then report the performance of \name on YCSB and TPC-C workloads in Sections~\ref{ssec:ycsb} and ~\ref{ssec:tpcc} respectively to evaluate \name on different distributions and workloads with user-initiated aborts. Finally, Section~\ref{sec:exp-ic3} compares \name and IC3 in performance with TPC-C at high contention. 

\subsection{Experimental Setup} \label{ssec:exp-setup}
We implement \name in DBx1000~\cite{dbx1000, yu2014}, a multi-threaded, in-memory DBMS prototype. DBx1000 stores all data in a row-oriented manner with hash table indexes. The code is open sourced~\cite{code}.
In this paper, we extended DBx1000 to run transactions in both stored-procedure and interactive modes. \revised{In the stored-procedure mode, all accesses in a transaction and the execution logic are ready before execution.

The interactive mode involves two types of nodes: (1) the \emph{DB server} processes requests like \texttt{get\_row()}, \texttt{update\_row()}, and \texttt{commit()}, and (2) the \emph{client server} executes transaction logic and sends requests to the DB server through gRPC. 
As \name does not require knowing the position of the last write for correctness (cf. \secref{ssec:decide_retire}). When \texttt{update\_row()} is called, the DB server immediately retires the lock after the write, essentially treating every write as the last write. In the interactive mode, the second optimization of no retiring does not apply. 
}

DBx1000 includes a pluggable lock manager that supports different concurrency control schemes. This allows us to compare Bamboo with various baselines {\bf within the same system}. We implemented 5 approaches described as follow:
\vspace{-.05in}
\begin{itemize}[leftmargin=.2in]
	\item {\bf WOUND\_WAIT}~\cite{bernstein81}: The \ww variant of 2PL (\secref{sec:2pl}). 
	\item {\bf NO\_WAIT}~\cite{bernstein81}: The No-Wait variant of 2PL where any conflict causes the requesting transaction to abort. 
	\item  {\bf WAIT\_DIE}~\cite{bernstein81}: The \waitd variant of 2PL (\secref{sec:2pl}).
	\item {\bf SILO}~\cite{tu13}: An in-memory database for fast and scalable transaction processing. It implements a variant of OCC. 
	\item {\bf IC3}~\cite{wang2016scaling}: State-of-the-art transaction chopping-based concurrency control protocol as described in \secref{sec:txn-chopping}. 
\end{itemize}
\vspace{-.05in}
\revised{Experiments in stored-procedure mode were run on a machine with four Intel Xeon CPU at 2.8GHz (15 cores) with 1056GB of DRAM, running Ubuntu 16.04. Each core supports two hardware threads. For the interactive mode, experiments were run on workstations provided by cloudlab \cite{cloudlab} with each machine containing} two Intel Xeon CPU at 2.6GHz (32 cores) with 376GB of DRAM, running Ubuntu 16.04. Each core supports two hardware threads. We collect transaction statistics, such as throughput, latency, and abort rates by running each workload for at least 30 seconds. 

\revised{In this paper, we assume that each hotspot contains one tuple and treat a set of hot tuples as multiple hotspots. For the experiments, transactions log to main memory --- modern non-volatile memory would offer similar performance. \name applies all the optimizations introduced in~\secref{ssec:optimization}. 
To decide the choice of $\delta$, we ran microbenchmark with a wide range of $\delta$. In general, as $\delta$ increases, the overhead in \name decreases, which improves performance in low-contention cases. However, a larger $\delta$ also increases the time spent on waiting for locks under high contention, which leads to less than 13\% drop in performance in our experiments. 
To balance the contrary effects under different workloads, we chose a $\delta$ of 0.15 across all workloads. 
As the dynamic timestamp assignment can also be applied to other 2PL-based protocols, we turn on the optimizations whenever they gain improvements from it. However, as only \name involves cascading aborts, the other protocols barely benefit from the optimization.}

\subsection{Experimental Analysis on Bamboo without Cascading Aborts (Single Hotspot)} \label{ssec:exp-1hs}

In this section, we evaluate the potential benefits of \name in the ideal cases where only one hotspot is present and thus does not induce cascading aborts. 

\vspace{.05in}
\noindent\textbf{Single Hotspot at Beginning}
\vspace{.05in}

We firstly design a synthetic workload with all random reads but a single \revised{read-modify-write} hotspot at the beginning. 
In stored-procedure mode, \name shows 6$\times$ improvements against the best-performing 2PL-based protocols (Wait-Die) due to savings on waiting. In the interactive mode, \name is up to 7$\times$ better than the best baseline (Wound-Wait). 

  
\vspace{.05in}
\noindent\textbf{Varying Transaction Length}
\vspace{.05in}

\begin{figure}[t]%
    \begin{subfigure}[t]{0.48\linewidth}
    \includegraphics[width=0.85\linewidth]{./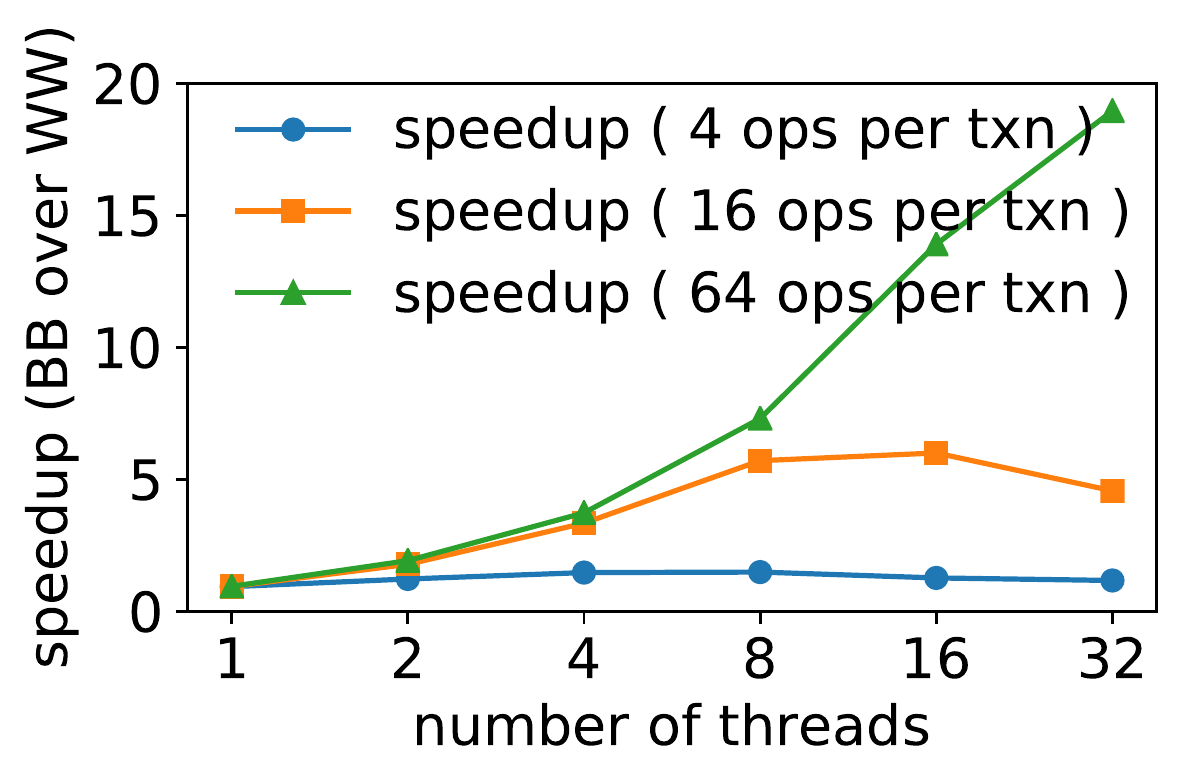} 
    \caption{Varying transaction length}
    \label{fig:one_hs_vs_threads}%
    \end{subfigure}
    \begin{subfigure}[t]{0.03\linewidth}
    \includegraphics[width=\linewidth]{./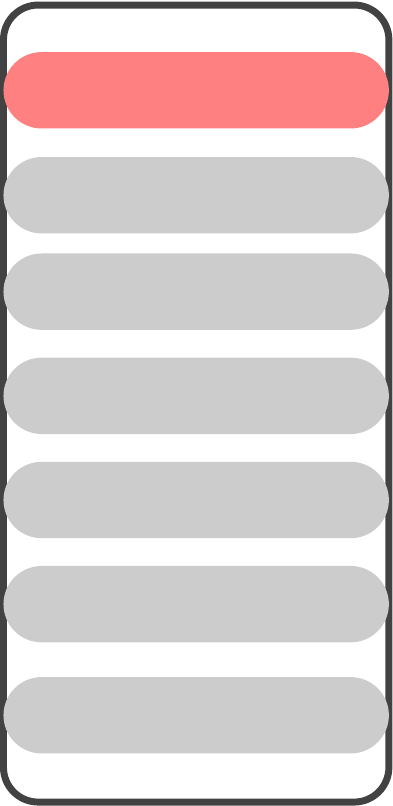} 
	\end{subfigure}
	\hspace{-.05in}
    \begin{subfigure}[t]{0.43\linewidth}
    \centering
    \includegraphics[width=0.9\linewidth]{./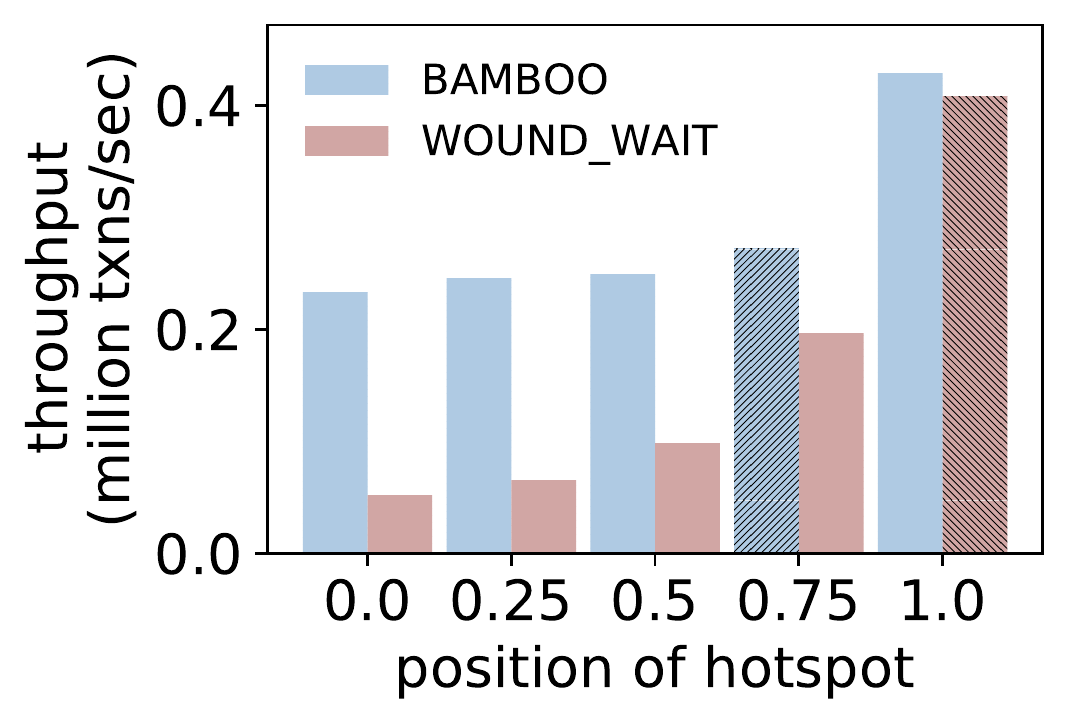} 
    \caption{Varying hotspot position}
    \label{fig:one_hs_vs_hs_pos}%
	\end{subfigure}
	\hspace{-.05in}
	\begin{subfigure}[t]{0.03\linewidth}
    \includegraphics[width=\linewidth]{./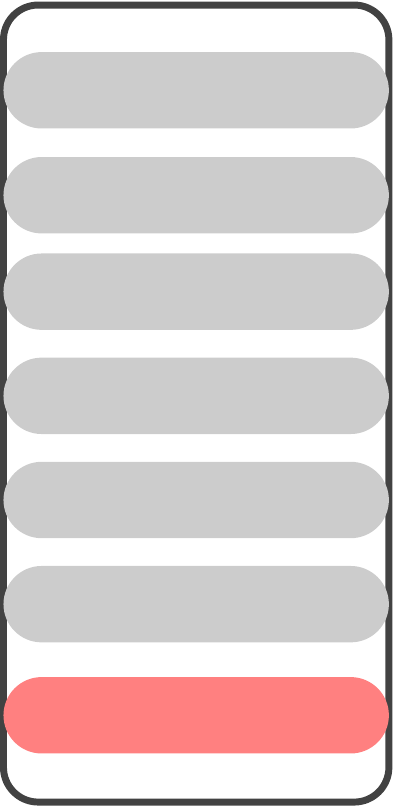} 
	\end{subfigure}
	\vspace{-.1in}
	\caption{Performance on synthetic benchmark with one hotspot at the beginning with various settings, stored-procedure mode}
\end{figure}

\revised{In Figure~\ref{fig:one_hs_vs_threads}, we vary the length of the transactions and report the speedup of \name (BB) over \ww (WW). Firstly, the results shows that \name has greater speedup for longer transactions by up to 19$\times$, which corresponds to a larger $A$ increasing the benefit in the modeling shown in~\secref{ssec:cascading-effect}. }
Secondly, the speedup first increases as the number of threads increases, and then saturates or drops when even more threads come in. This is due to the limitation of the inherent level of parallelism in the workload. 

\vspace{.05in}
\noindent\textbf{Varying Hotspot Position}
\vspace{.05in}

\revised{Instead of fixing the hotspot at the beginning, we vary the position of the hotspots in this experiment. The result is shown in Figure~\ref{fig:one_hs_vs_hs_pos}.}
The two small figures on each side of the x-axis corresponding to the position of the hotspot when $x=0$ (beginning of transaction) and $x=1$ (end of transaction), respectively. We show the results for workloads with transactions of 16 operations, while the results for other transaction lengths have similar observations. \name provides a higher speedup against \ww when the access of hotspot is earlier in a transaction. \revised{The result also aligns with the modeling as an early access gives larger $A_{ww}$ and thus greater benefit from $A_{ww} - A_{bb}$. }

{\bf Summary:} Through reducing lock waiting time, \name can improve performance significantly (up to 19$\times$ over \ww). Some factors have an impact on the performance gain --- \name shows larger speedup under a higher level of parallelism (i.e., more threads), longer transactions, and ``earlier'' hotspot accesses.


\subsection{Experimental Analysis on Bamboo with Cascading Aborts (Multiple Hotspots)}\label{ssec:exp-cascading}

Next, we present empirical evaluation of \name serving workloads that can induce cascading aborts to understand their effects. We start with synthetic workloads with two \revised{read-modify-write} hotspots and fourteen random reads. \revised{Here we use a dataset of more than 100 GB.}


To study the tradeoff between waits and aborts, we precisely control the workloads as two types: (1) we fix the first hotspot at the beginning of each transaction while moving the second around. In this case, the benefit \name gained over \ww is fixed and the chance of having cascading aborts increases as the distance between the two hotspots increases. We use the case to study how different magnitude of cascading aborts can affect the gains. (2) we fix the second hotspot at the end of the transactions and move the other around to study the case where the benefits and the chance of cascading aborts increase simultaneously for \name. 

\vspace{.05in}
\noindent\textbf{Fix One Hotspot at the Beginning}
\vspace{.05in}

\revised{In this experiment, we also show BAMBOO-base which does not have the second optimization of not retiring the last few operations. 
As shown in \figref{fig:hs2_fix0_th}, \name outperforms \ww for all distances. 
\figref{fig:hs2_fix0_runtime} shows how \name gains speedup by trading more aborts for less blocking. The improvements of \name can be up to 3$\times$. 
When the distance $x=0.75$ (i.e., there are 10 operations between the two hotspots), \name outperforms Wound-Wait by 37\%, although the abort rate of \name is 72\% higher. 
This result can be explained by the model, which indicates an improvement of at least 21.7\% in \name following $(A_{ww}-A_{bb})P_{\textit{conflict}} - B_{bb}*P_{\textit{cas\_abort}} = 15/16*1 - B_{bb}*0.72 \ge 0.217$.
The two versions of \name differ only when the second hotspot is at the end of the transaction ($x=1.0$). With the optimization, the last hotspot will not be retired which greatly reduces the bookkeeping overhead. \figref{fig:hs2_fix0_runtime} also illustrates that \name retains its saving from blocking while not suffering from aborts with the optimization. 
}


\begin{figure}[t]%
    \centering
    \begin{subfigure}[t]{0.03\columnwidth}
		\includegraphics[width=\linewidth]{./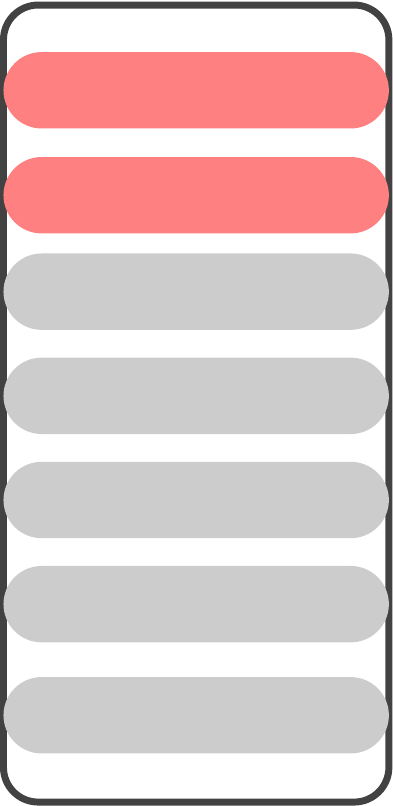}
    \end{subfigure}
    \begin{subfigure}[t]{0.45\columnwidth}
		\includegraphics[width=\linewidth]{./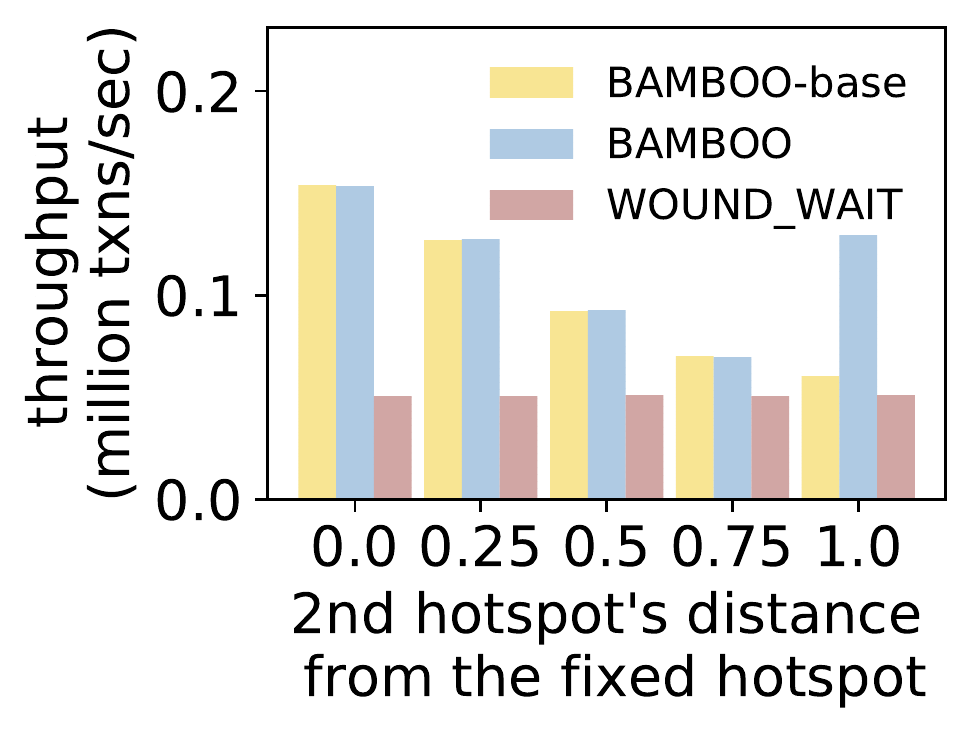}
		\caption{throughput}
		\label{fig:hs2_fix0_th} 
    \end{subfigure}
    \begin{subfigure}[t]{0.03\columnwidth}
		\includegraphics[width=\linewidth]{./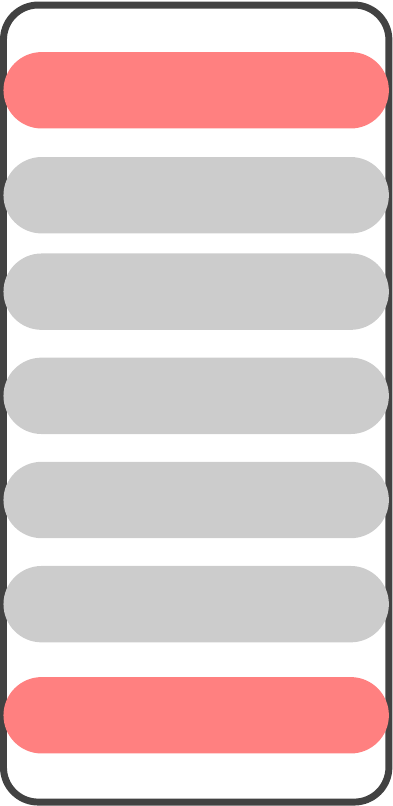}
    \end{subfigure}
    \begin{subfigure}[t]{0.45\columnwidth}
		\includegraphics[width=\linewidth]{./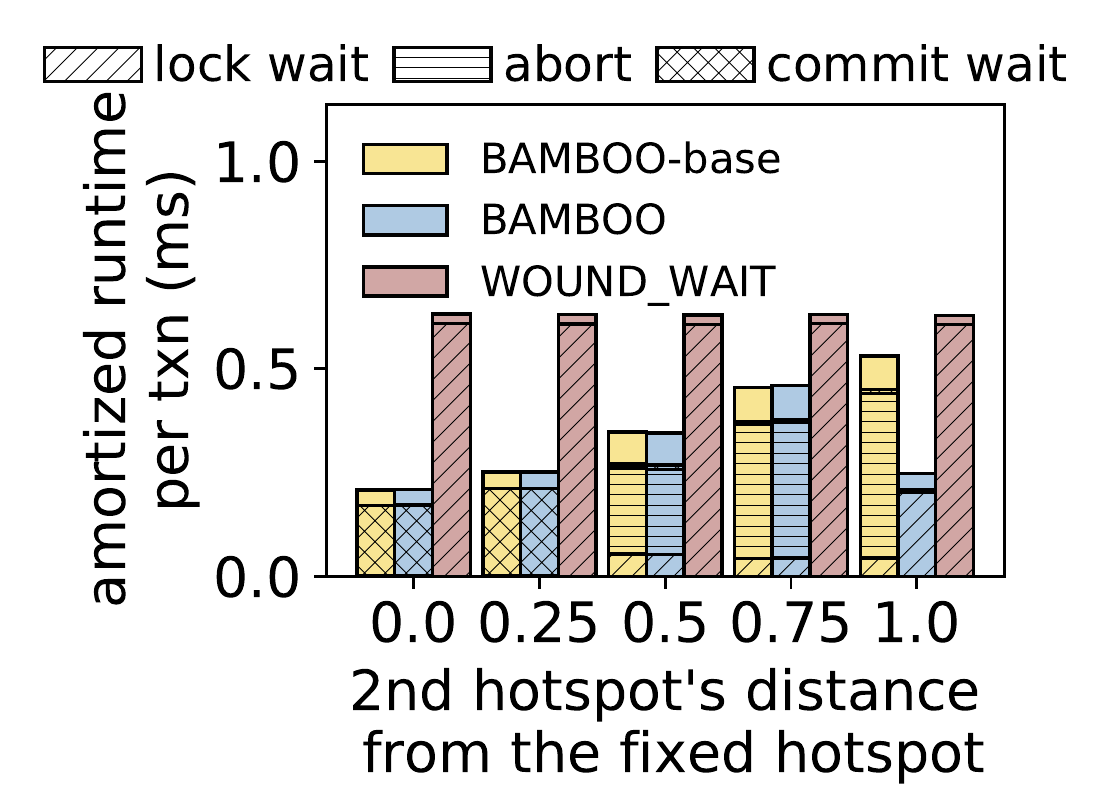} 
		\caption{runtime analysis}
    	\label{fig:hs2_fix0_runtime}
    \end{subfigure}
    \caption{\textbf{One hotspot at the beginning} for \name (left) and \ww (right), stored-procedure mode (32 threads)
    }
    \label{fig:hs2_fix0}%
\end{figure}

\vspace{.05in}
\noindent\textbf{Fix One Hotspot at the End}
\vspace{.05in}

We now fix the second hotspot at the end and change the position of the first hotspot. 
Compared to \figref{fig:hs2_fix1_th}, this workload here has less advantage for \name to begin with and yet introduces more cascading aborts as the benefit increases.
\revised{\figref{fig:hs2_fix1_runtime} shows that the time spent on aborts in \name never exceeds the time spent on waiting in \ww. However, \name without the second optimization (i.e., BAMBOO-base) may suffer from the overhead when it barely has benefits when $x = 0$, where the theoretical improvement is only 1/16.
We note that such cost is a function of the workload and underlying system. It may be significant with stored-procedure as shown here, which makes our optimization necessary in mitigating the problem. With other system setups such as the interactive mode we will show in the later experiments, the trade-off between such overhead and the gain from retiring can greatly change. 
}
%

 
\begin{figure}[t]%
    \centering
    \begin{subfigure}[t]{0.03\columnwidth}
		\includegraphics[width=\linewidth]{./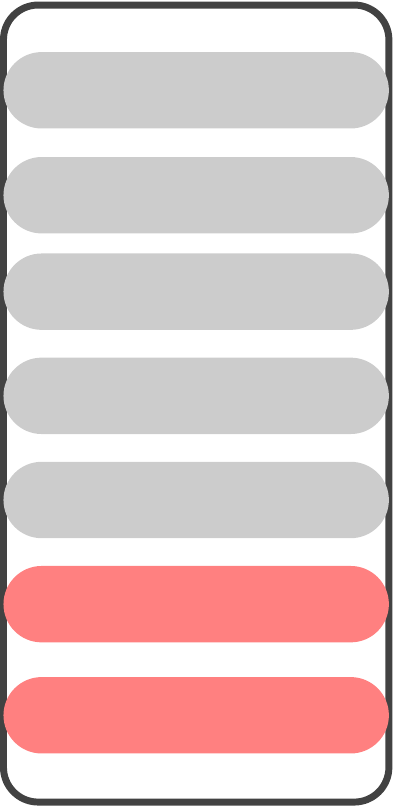}
    \end{subfigure}
    \begin{subfigure}[t]{0.45\columnwidth}
    	\center
		\includegraphics[width=\linewidth]{./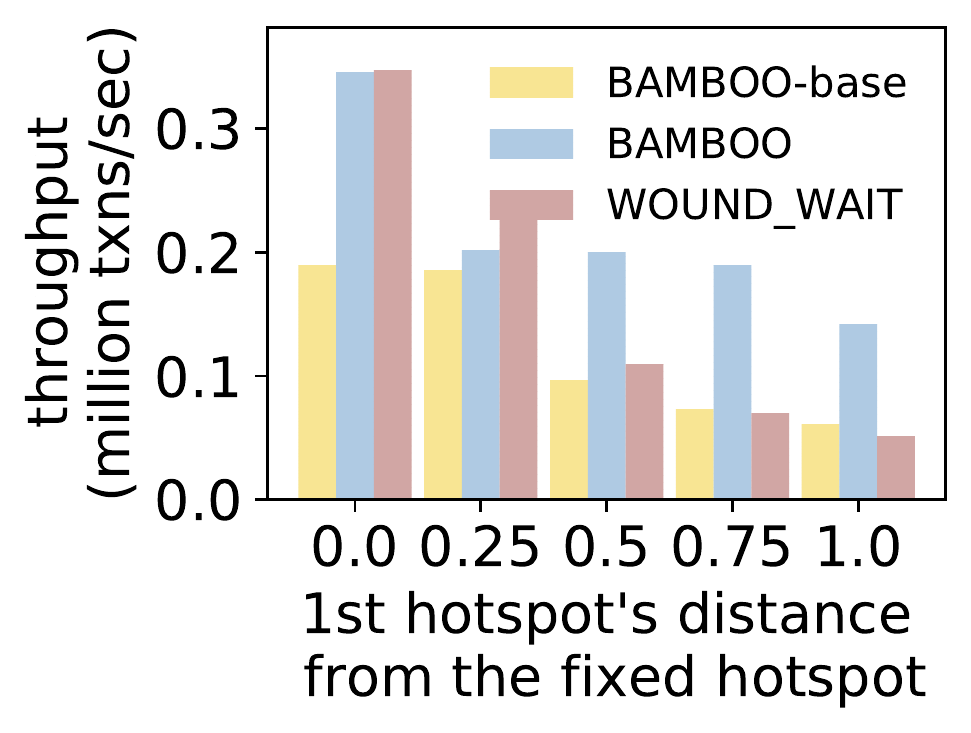}
		\caption{throughput}
		\label{fig:hs2_fix1_th} 
    \end{subfigure}
    \begin{subfigure}[t]{0.03\columnwidth}
		\includegraphics[width=\linewidth]{./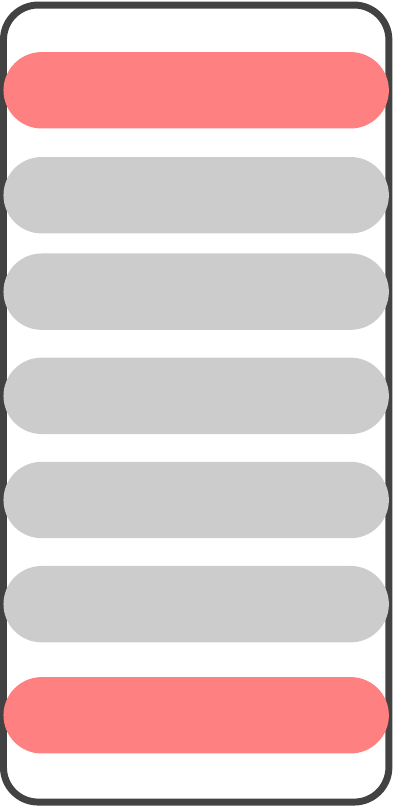}
    \end{subfigure}
    \begin{subfigure}[t]{0.45\columnwidth}
    	\center
		\includegraphics[width=\linewidth]{./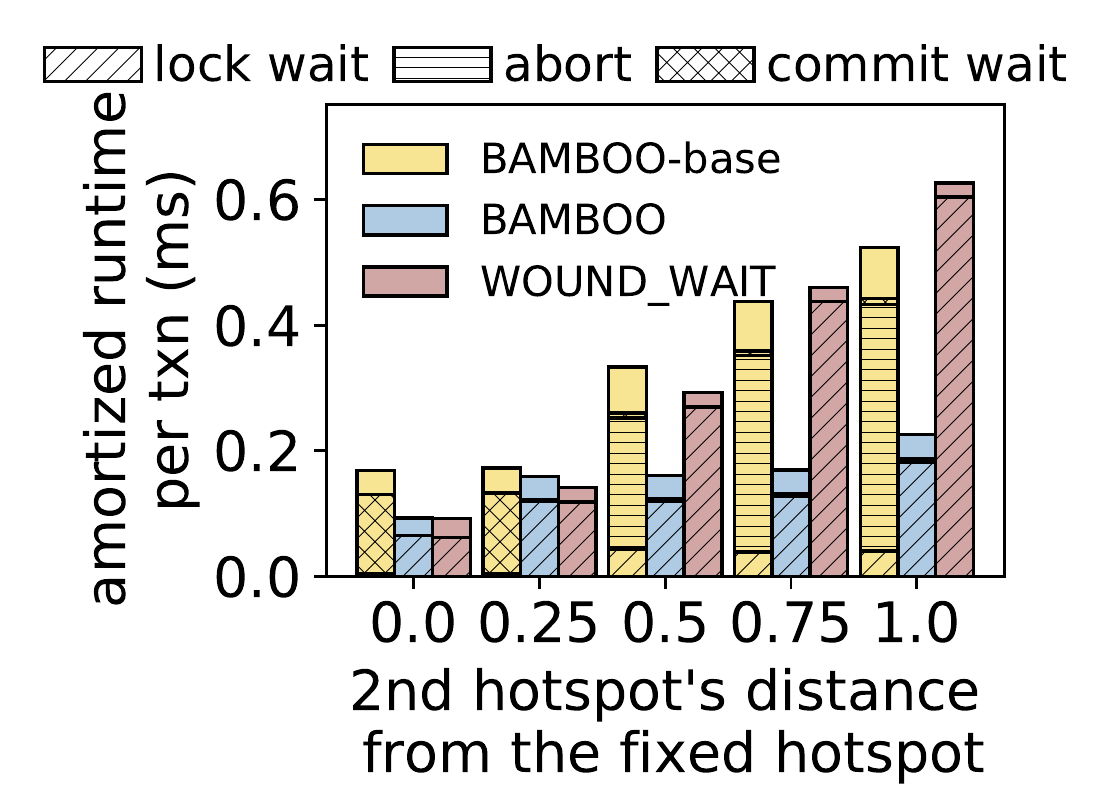} 
		\caption{runtime analysis}
    	\label{fig:hs2_fix1_runtime}
    \end{subfigure}
    \caption{\textbf{The second hotspot at the end} for \name (left) and \ww (right), stored-procedure mode (32 threads)}%
    \label{fig:hs2_fix1}%
\end{figure}

{\bf Summary:} The potential benefit of \name against \ww is a tradeoff between the benefit of reducing lock waiting time and the cost of cascading aborts and other overhead. Our measurements show that the benefits of reducing lock waiting time is usually greater than the cost of abort. However, \name can suffer from overhead with certain system setup when the benefit is minimal. In this case, our optimization of conditionally retiring some write operations should be applied for such cases. 

\vspace{-.15in}
\vspace{-.15in}
\subsection{Experiments on YCSB} \label{ssec:ycsb}
We now move to an even more complex workload --- YCSB with zipfian distribution. 
We will show how \name performs compared to other baselines as the number of threads, data accessing distribution, and read ratio vary. Note that in general \name only targets the high-contention setup in this workload as it is where hotspots (that most transactions would access) are present. 

The Yahoo! Cloud Serving Benchmark (YCSB)~\cite{cooper10} is a collection of workloads that are representative of large-scale services created by Internet-based companies. For all experiments in this section, we use a \revised{large scale database of more than 100 GB}, containing a single table with \revised{100 million} records. Each YCSB tuple has a single primary key column and then 10 additional columns each with 100 bytes of randomly generated string data. The DBMS creates a single hash index for the primary key. 

Each transaction in the YCSB workload by default accesses 16 records in the database. Each access can be either a read or an update. We control the overall read/write ratio of a transaction based on a specified $read\_ratio$.
We also control the workload contention level through adjusting $\theta$, a parameter controlling the Zipfian data distribution. For example, when $\theta = 0$, all tuples have equal chances to be accessed. When $\theta=0.6$ or $\theta=0.8$, a hotspot of 10\% of the tuples in the database are accessed by $\sim$40\% and $\sim$60\% of all transactions, respectively.

\begin{figure}[t]%
    \begin{subfigure}[t]{0.48\columnwidth}
    	\center
		\includegraphics[width=\linewidth]{./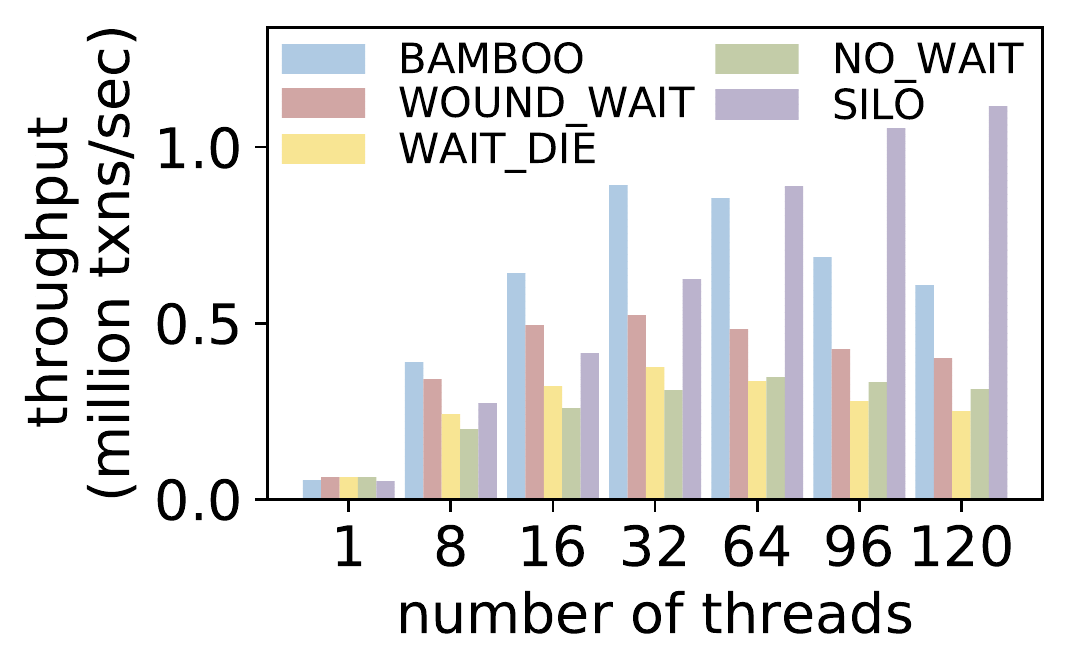}
		\caption{throughput}%
		\label{fig:ycsb_threads}%
    \end{subfigure}
    \begin{subfigure}[t]{0.48\columnwidth}%
    \center
    \includegraphics[width=\linewidth]{./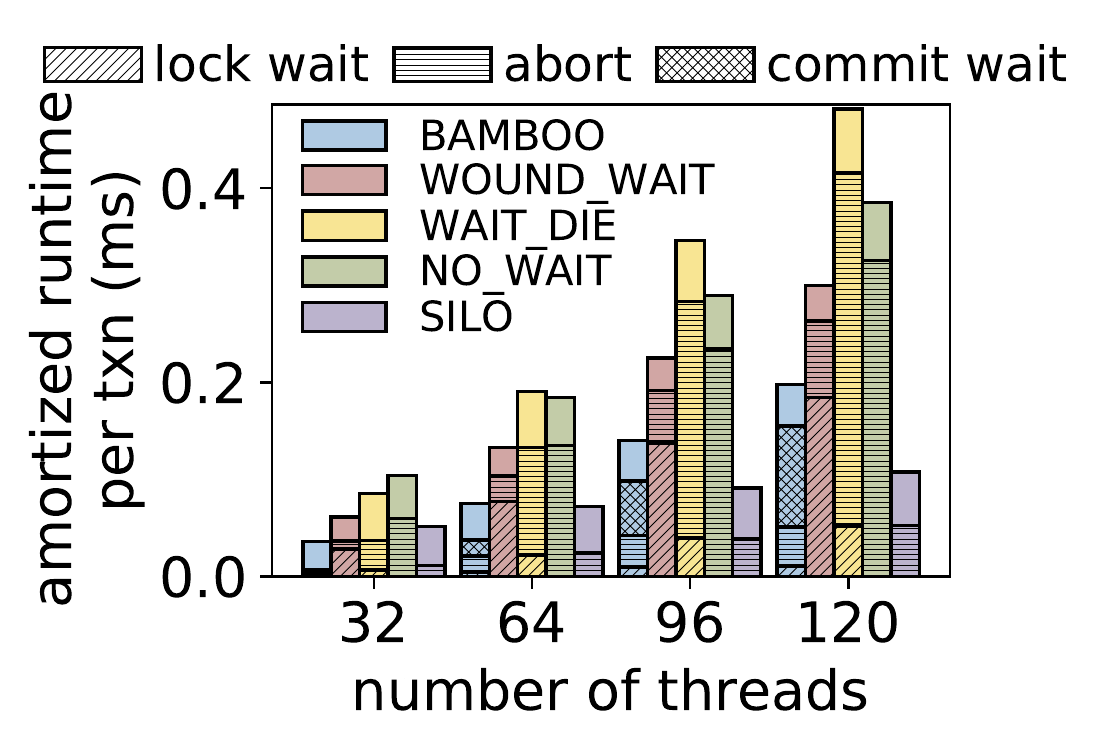} 
    \caption{runtime analysis}%
    \label{fig:ycsb_threads_runtime}%
	\end{subfigure}
	\vspace{-.1in}
	\caption{YCSB with varying thread count, stored-procedure mode ($\theta = 0.9$, $read\_ratio=0.5$)}
\end{figure}

\begin{figure}[t]%
	\centering
    \begin{subfigure}[t]{0.49\columnwidth}
    	\center
		\includegraphics[width=\linewidth]{./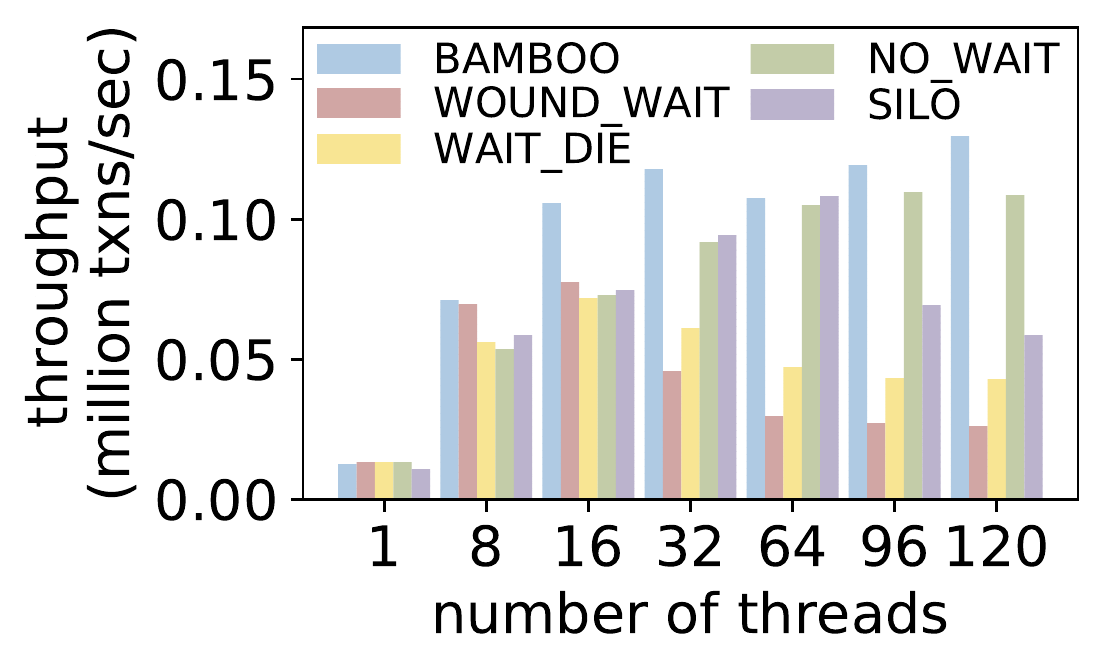}
		\caption{throughput}%
		\label{fig:ycsb_long_txn}%
    \end{subfigure}
    \begin{subfigure}[t]{0.49\columnwidth}%
    \center
    \includegraphics[width=\linewidth]{./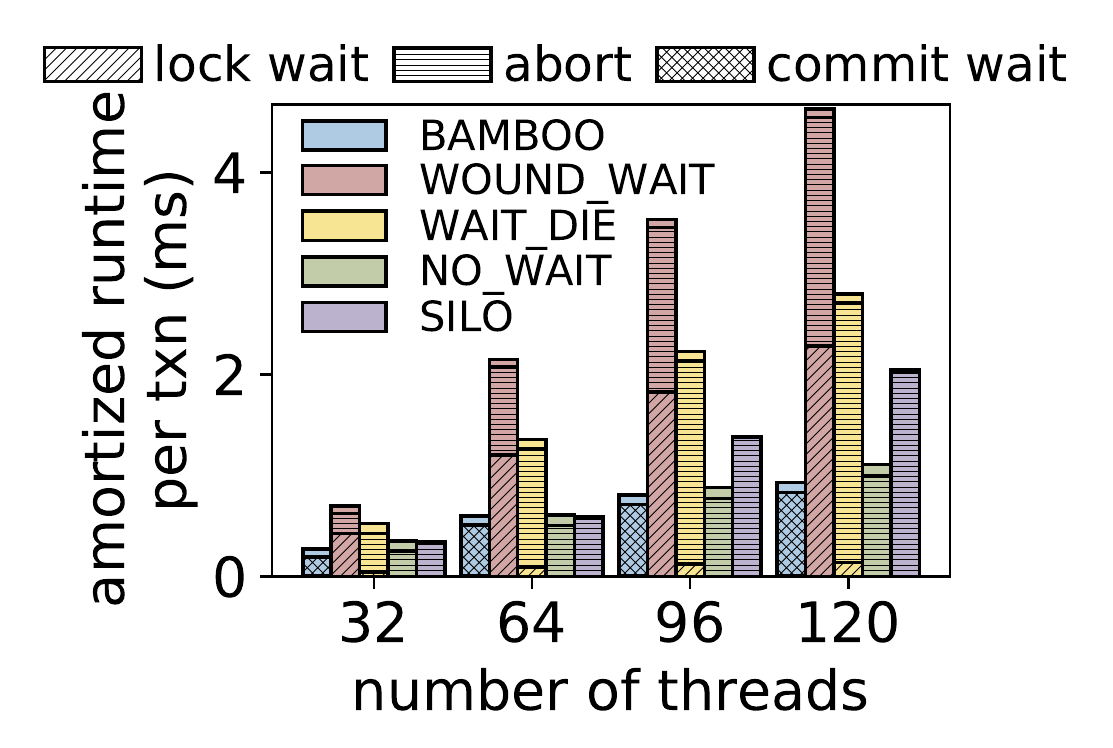} 
    \caption{runtime analysis}%
    \label{fig:ycsb_long_txn_runtime}%
	\end{subfigure}
	\vspace{-.1in}
	\caption{YCSB with 5\% long read-only transactions accessing 1000 tuples, stored-procedure mode ($\theta = 0.9$, $read\_ratio=0.5$) }
\end{figure}

{\bf Varying Number of Threads.} 
\revised{
\figref{fig:ycsb_threads} demonstrates that \name's improvement against \ww with different number of threads in highly contentious YCSB ($\theta=0.9$) configured in stored-procedure mode. \figref{fig:ycsb_threads_runtime} shows \name's benefits come from reducing waiting time without introducing many aborts. With 64 threads, \name achieves the maximum speedup against \ww, which is up to 1.77$\times$. 
All 2PL-based protocols show degradation after 32 threads and \name underperforms SILO when the thread count more than 96. This is mainly due to the intrinsic lock thrashing problem in 2PL~\cite{thomasian1993two}.} 

\revised{
{\bf Long Read-Only Transaction.} This experiment uses a workload with 5\% long read-only transactions accessing 1000 tuples and 95\% read-write transactions accessing 16 tuples. \figref{fig:ycsb_long_txn} shows that \name outperforms all other protocols most of the time. Compared with waiting-based protocols, \name benefits much from reducing waiting while rarely aborts. It shows an improvement of up to 5$\times$ against \ww. \name's optimization of no RAW conflict also contribute to the scenario as long read-only transactions will not block writes nor cause cascading aborts. SILO experience performance degradation in this case since long transactions may starve and aborts dominate the runtime as shown in \figref{fig:ycsb_long_txn_runtime}. \name also outperforms No-Wait as \name ensures the priorities of transactions and commit 20\% more long transactions than No-Wait when the thread count is 120. 
}

{\bf Varying Read Ratio.} We examine how varying $read\_ratio$ would influence the performance of different protocols in stored-procedure mode. 
\name shows improvements against all other protocols regardless of the read ratio. \revised{The percentage of improvement ranges from 27\% to 71\%}

{\bf Varying Data Accessing Distribution.} \figref{fig:ycsb_zipfian} shows how \name performs in both stored-procedure mode and interactive mode as $\theta$ of Zipfian distribution changes. 
As showed in \figref{fig:ycsb_zipfian_a}, \name outperforms all 2PL-based protocols under high contention (e.g. $\theta > 0.7$). \revised{Compared to \ww, \name provides up to 72\% improvements in throughput. For cases of lower contention, \name has $\sim$10\% degradation in throughput compared with \ww due to overhead. However, such degradation diminishes in the interactive mode where the expensive network communication dominates. In the interactive mode, \name is comparable with \ww with $\sim$8\% improvements when $\theta \leq 0.8$ and shows up to 2$\times$ speedup over \ww. 

Similarly, }SILO’s low-level performance optimizations (e.g., lock-free data structures) and the cache warming-up effect due to many aborts makes its performance significantly better than other 2PL-based protocols in stored-procedure mode. However, the performance advantage of Silo disappears in the interactive mode where aborts are significantly more expensive due to expensive gRPC calls.
\name outperforms all other protocols \revised{including SILO} in interactive mode where the cache warming-up effect has less impact. 

\begin{figure}[t]%
\vspace{.05in}
    \centering
    \begin{subfigure}[t]{0.49\linewidth}
    \center
    \includegraphics[width=0.97\linewidth]{./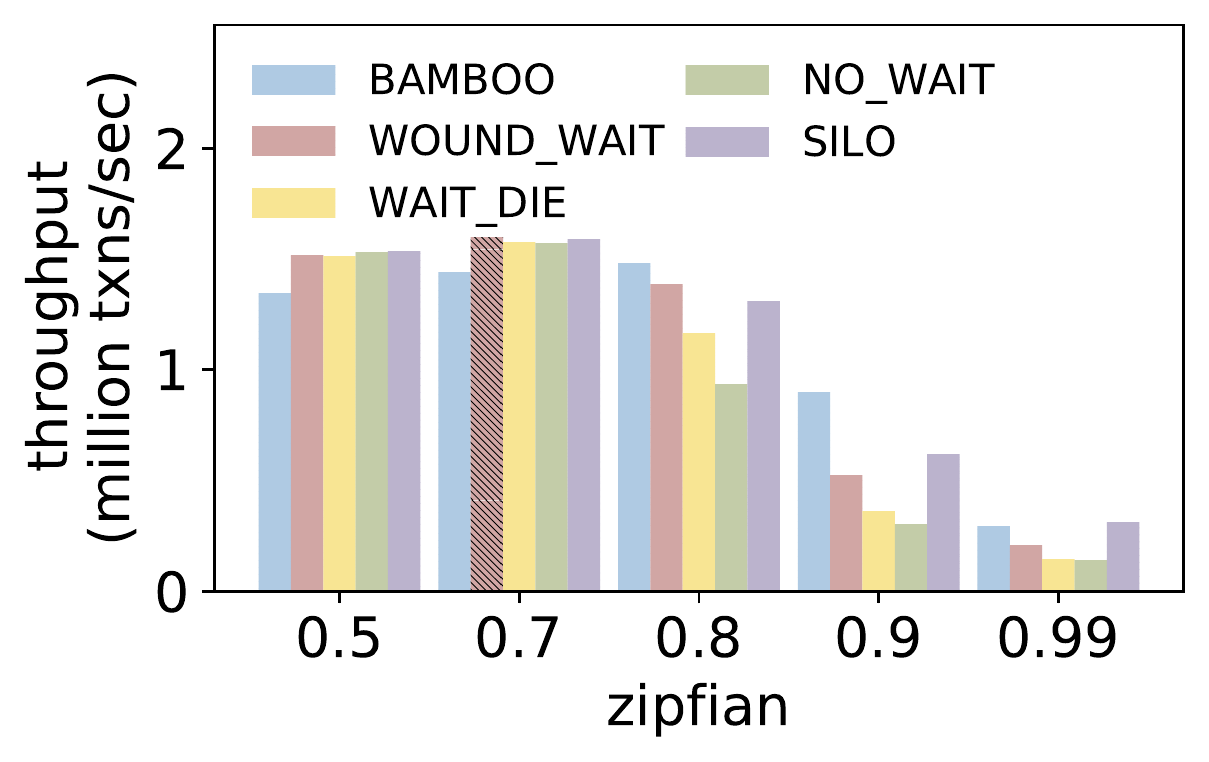}
    \caption{throughput}
    \label{fig:ycsb_zipfian_a}
    \end{subfigure}
    \begin{subfigure}[t]{0.49\linewidth}
    	\center
		\includegraphics[width=\linewidth]{./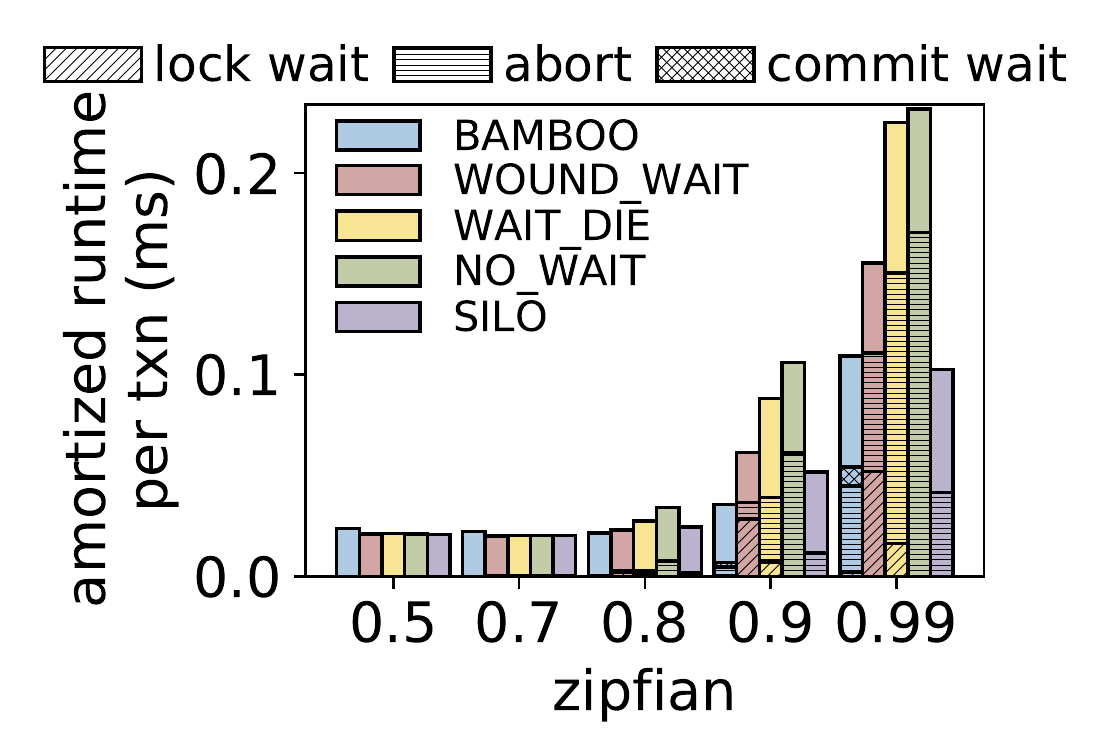} 
		\caption{runtime analysis}
		\label{fig:ycsb_zipfian_b}
    \end{subfigure}
    \vspace{-.1in}
    \caption{\textbf{YCSB with varying distribution} --- throughput vs. Zipfian skew level for YCSB, stored-procedure mode, ($read\_ratio = 0.5$, 16 threads)}
    \label{fig:ycsb_zipfian}
\end{figure}

\vspace{-.15in}

\subsection{Experiments on TPC-C Results}\label{ssec:tpcc}
Finally, we compare \name with other concurrency control schemes on the TPC-C benchmark~\cite{tpc-c}. We only ran experiments with 50\% new-order transactions and 50\% payment transactions. Note in the benchmark, 1\% of new order transaction are chosen at random to simulate user-initiated aborts. 

We first vary the number of threads. Figure~\ref{fig:tpcc_threads} presents the behavior of \name under high contention in stored-procedure mode. 
\name can obtain up to 2$\times$ improvements against \ww in stored-procedure mode. \revised{Similar to previous observations, SILO outperforms other 2PL protocols given its cache warming-up effect in stored-procedure mode.} In interactive mode, \name's performance scales up till 32 threads and achieves up to 4$\times$ and 14$\times$ improvements against \ww and SILO respectively.

\begin{figure}[t]%
    \centering
    \begin{subfigure}[t]{0.49\linewidth}
		\includegraphics[width=\linewidth]{./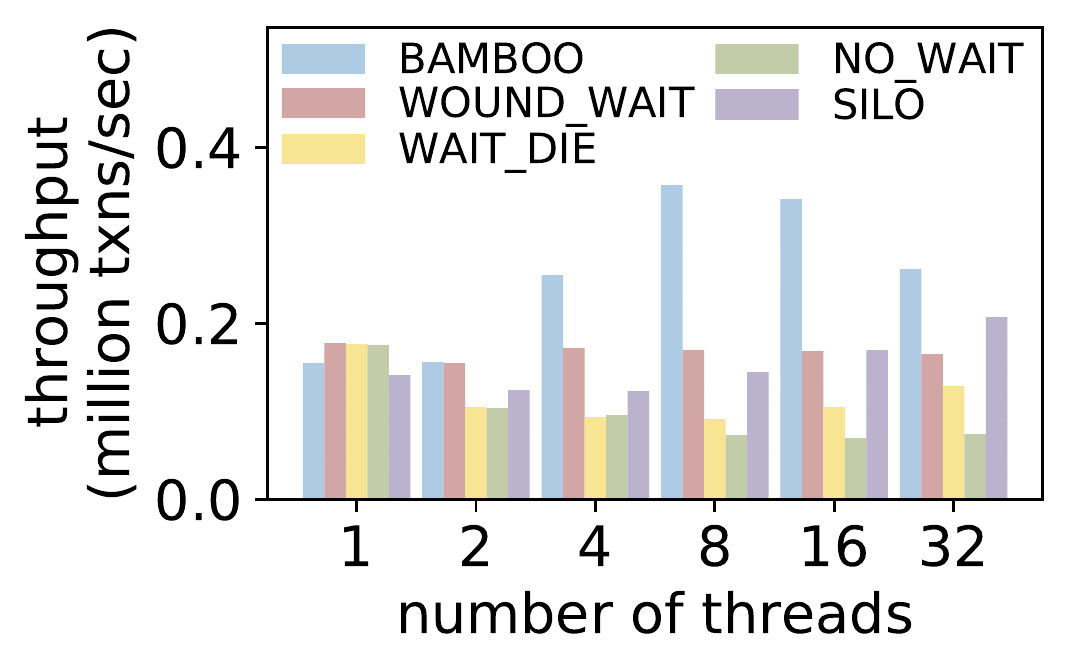} 
		\caption{stored-procedure mode}
    \end{subfigure}
    \begin{subfigure}[t]{0.45\linewidth}
		\includegraphics[width=\linewidth]{./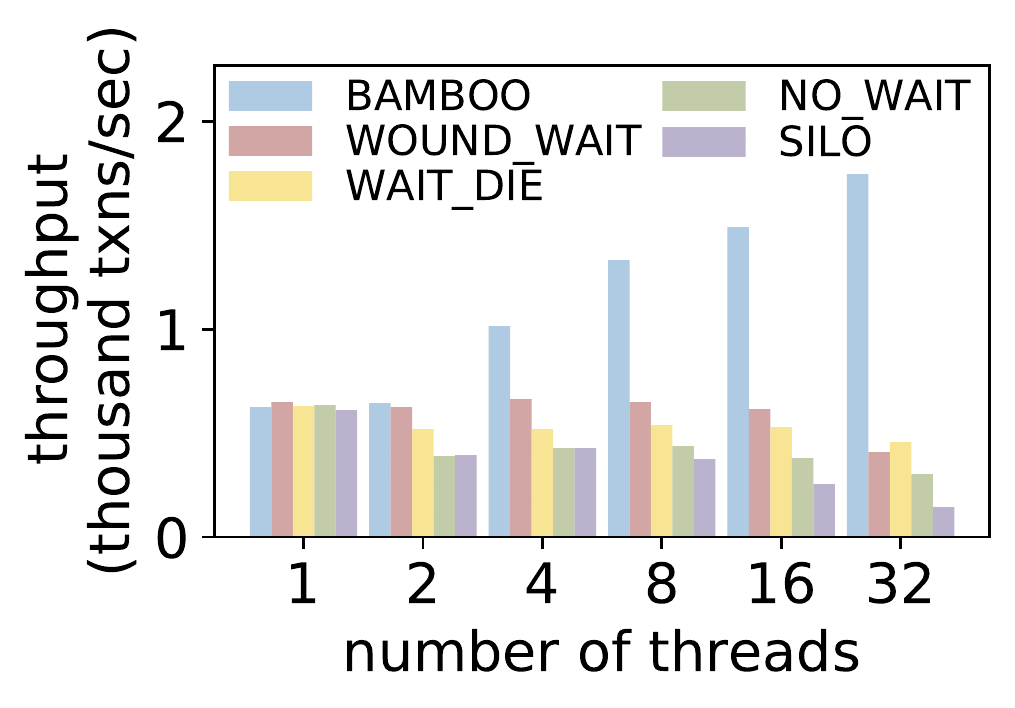} 
		\caption{interactive mode}
    \end{subfigure}
    \vspace{-.1in}
    \caption{vary \# of threads in TPC-C (1 warehouse)}%
    \label{fig:tpcc_threads}%
\end{figure}

\begin{figure}[t]%
    \centering
    \begin{subfigure}[t]{0.49\linewidth}
		\includegraphics[width=\linewidth]{./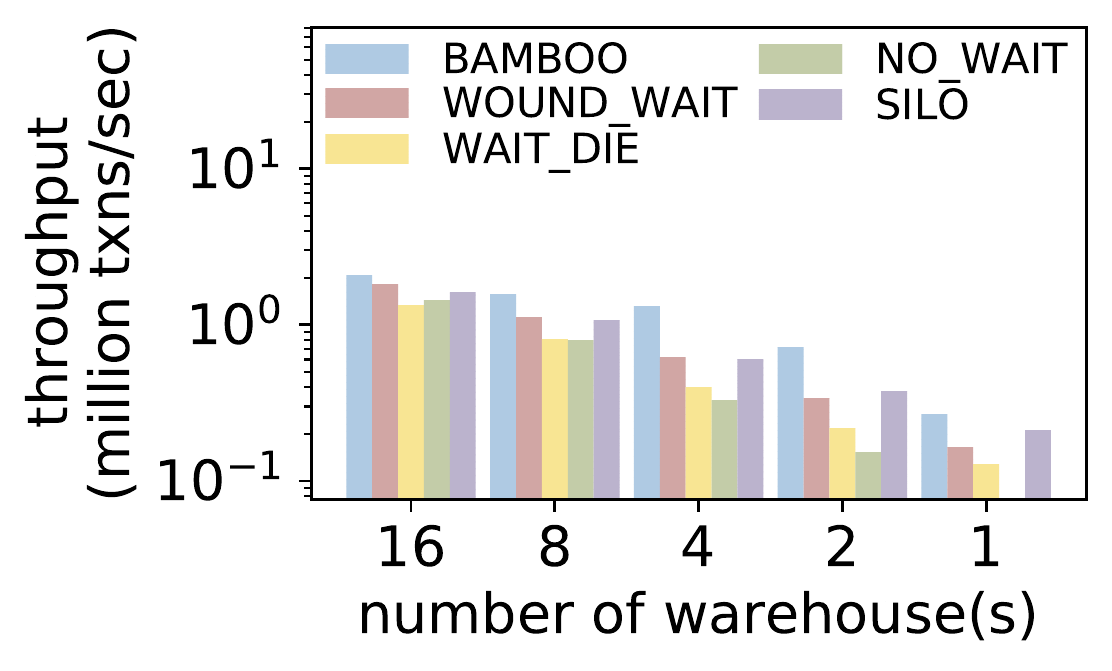} 
		\caption{stored-procedure mode}
    \end{subfigure}
    \begin{subfigure}[t]{0.45\linewidth}
		\includegraphics[width=\linewidth]{./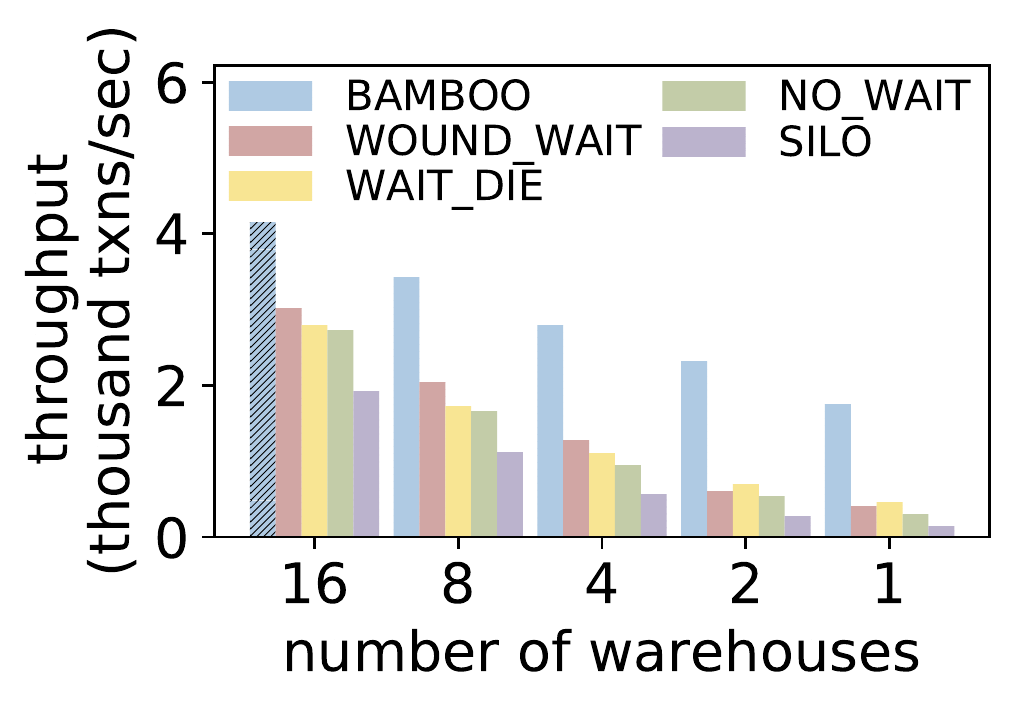} 
		\caption{interactive mode}
		\label{fig:tpcc_wh_b}
    \end{subfigure}
    \vspace{-.1in}
    \caption{vary \# of warehouses in TPC-C (32 threads)}%
    \label{fig:tpcc_wh}%
\end{figure}

Figure~\ref{fig:tpcc_wh} presents how \name performs with a different number of warehouses with 32 threads.
In stored-procedure mode, \name outperforms other 2PL-based protocols in \revised{high-contention} cases. The improvement depends on the number of warehouses. For example, when the number of warehouses is one (which is similar to the single hotspot cases), \name outperforms \ww by up to 2$\times$. When the workload is less contentious (e.g. with more warehouses), the difference between \name and other protocols is smaller. This is expected since \name targets at highly contentious cases. Figure~\ref{fig:tpcc_wh_b} shows that \name has more notable improvements of up to 4$\times$ over the best baseline when running in interactive mode.

\vspace{-.15in}

\subsection{Comparison with IC3}\label{sec:exp-ic3}

We choose to compare the performance of \name and IC3 in a separate section due the fact that IC3 requires the knowledge of the entire workload --- an assumption not made in the other protocols. 
We implemented IC3 with all its optimizations in DBx1000 and show the results with the best setting. Note that we omit the optimizations for commutative operations for a fair comparison to all algorithms. 

\figref{fig:tpcc_ic3_original} shows the comparison between \name and IC3 on the mix of payment and new-order transactions with a global warehouse table. As payment and new-order accesses different columns of warehouse table and district table (the most contentious tables in TPC-C), IC3's prior knowledge on all column accesses help it get rid of much contention. Given this, IC3 outperforms \name though it enforces some waiting when two transactions access the same column of different tuples.

However, for workloads where contentious transactions access the same columns of hotspot tuples, IC3 cannot gain such benefits and underperforms \name due to its column-level static analysis. To illustrate the effect, we simply modified new-order transactions to read one more column (\texttt{W\_YTD}) that will be updated by payment transactions. It is conceivable for a real-world transaction to read a hot field updated by other transactions. 
The result and runtime analysis are shown in \figref{fig:tpcc_ic3_modified} and \figref{fig:tpcc_ic3_modified_runtime}. While the performance of \name is barely affected, the performance of IC3 drops significantly. As expected, IC3 spends more time on waiting than \name due to its column-level static analysis. Note that, the increase in aborts is due to IC3's optimistic execution. The version without optimistic execution shows worse performance with more waiting. It would serialize the execution of potentially but not actually conflicting sub-transactions instead of just serializing the validation phases of these sub-transactions. Overall, \name has up \revised{1.5$\times$} improvement against IC3 on the slightly modified TPC-C workload that has ``true'' conflicts between payment and neworder transactions on the warehouse table. 

\begin{figure}[t]%
    \vspace{.1in}
    \begin{subfigure}[t]{0.49\linewidth}
		\includegraphics[width=\linewidth]{./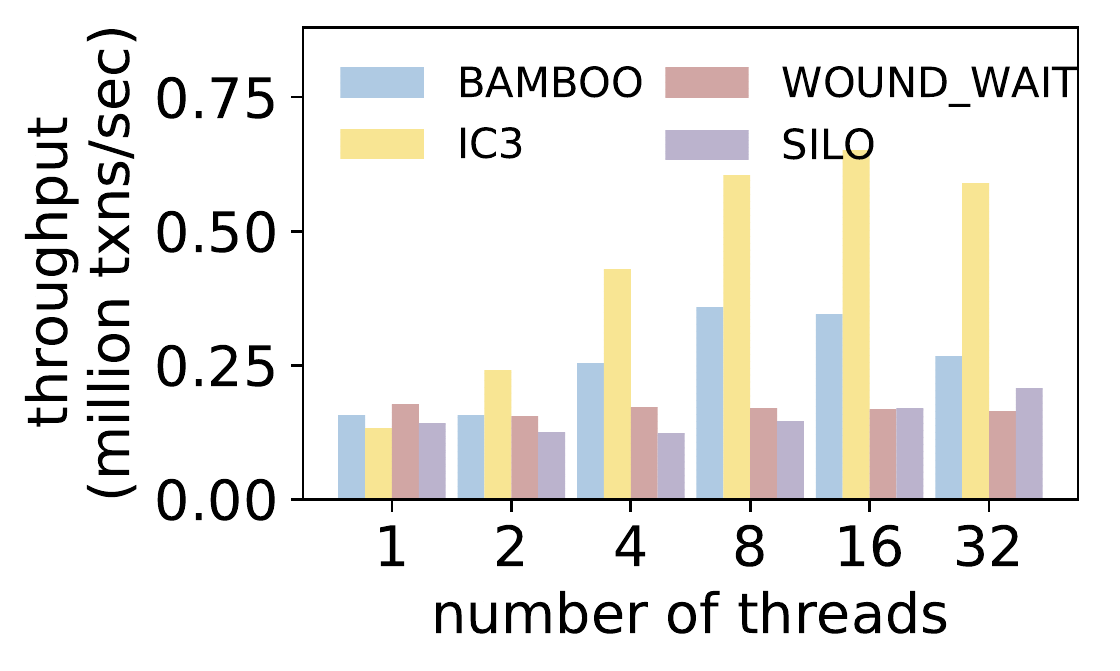} 
		\caption{with original new-order }
		\label{fig:tpcc_ic3_original}
    \end{subfigure}
    \begin{subfigure}[t]{0.45\linewidth}
		\includegraphics[width=\linewidth]{./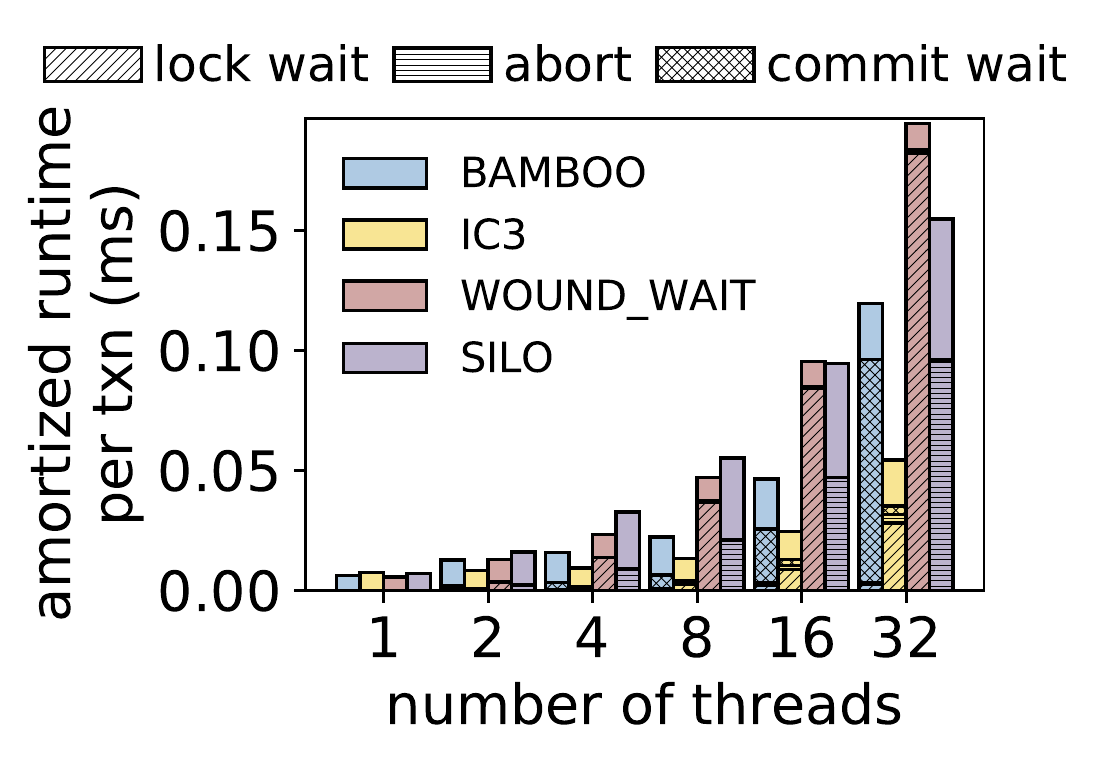} 
		\caption{runtime analysis}
    \end{subfigure}
    \vspace{-.1in}
    \begin{subfigure}[t]{0.49\linewidth}
		\includegraphics[width=\linewidth]{./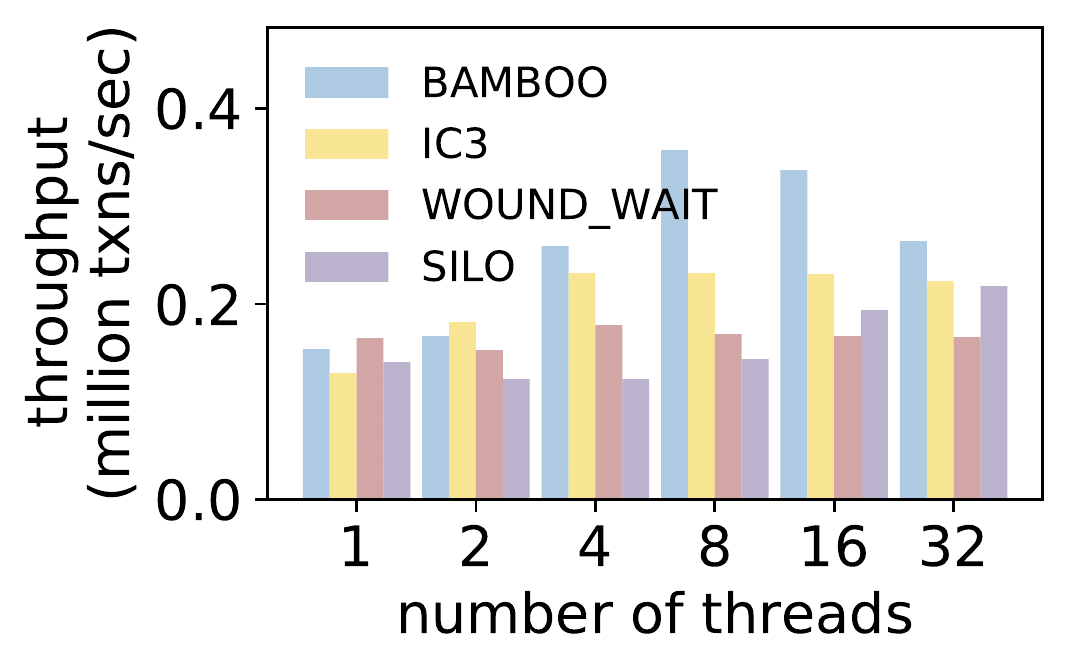} 
		\caption{with modified new-order}
		\label{fig:tpcc_ic3_modified}
    \end{subfigure}
    \begin{subfigure}[t]{0.45\linewidth}
		\includegraphics[width=\linewidth]{./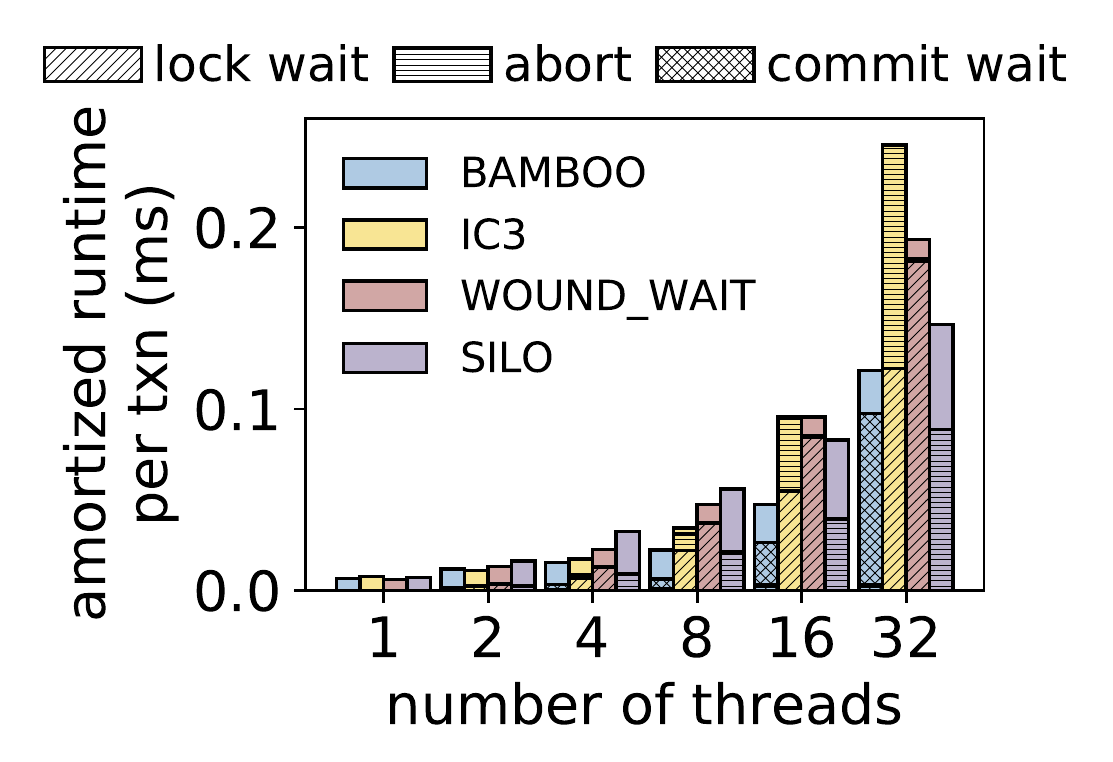} 
		\caption{runtime analysis}
		\label{fig:tpcc_ic3_modified_runtime}
    \end{subfigure}
    \caption{Bamboo vs. IC3 in TPC-C, stored-procedure mode (1 warehouse)}
    \label{fig:tpcc_ic3}%
\end{figure}

{\bf Summary.} In stored-procedure, \name is better than IC3 when there is a global hotspot that most transactions truly conflict on 
. The column-level static analysis in IC3 helps reduce contention when transactions are more likely to access different columns of the same tuple, however, it makes IC3 non-applicable for ad-hoc transactions. 
\vspace{-.15in}

\section{Related Work} \label{sec:related}
In this section, we discuss a few lines research related to \name. 
\subsection{Violating Two-Phase Locking}
Previous work has explored mechanisms that violate 2PL to improve transaction performance but targeted different aspects than \name. In distributed systems, Jones et al.~\cite{speculative-distributed} proposed a locking scheme that allows dependent transactions to execute when one transaction is waiting for its execution to finish in multiple partitions. The technique avoids making transactions wait for earlier transactions' distributed coordination. Gupta et al.~\cite{opt-distributed} proposed a distributed commit protocol where transactions are permitted to read the uncommitted writes of transactions in the prepare phase of two-phase commit, but not during transaction execution. 

Locking violation is also used to avoid holding locks while logging.
Early Lock Release (ELR)~\cite{kimura2012efficient, soisalon1995partial} is based the observation that a canonical transaction holds locks after the execution (i.e., has pre-committed) while waiting for the log to be flushed. ELR allows such pre-committed transactions to release their locks early so that other transactions can proceed and uses tags to enforce the commit order among dependent transactions. ELR has been applied in both research protocols~\cite{johnson2010aether} and commercial systems~\cite{orleans-txn}. 

Controlled lock violation (CLV)~\cite{graefe2013controlled} achieved the same goal as ELR. But instead of using tags to enforce the commit order, CLV extends data structures in the lock table to track and enforce the dependency order, therefore working for more general cases. The dependency tracking mechanism of \name is inspired by CLV. 
In contrast to ELR and CLV, \name explores violating 2PL to avoid holding locks on hotspots during the \textit{execution phase} before a transaction pre-commits to exploit more parallelism. 

\textit{Ordered shared locks}~\cite{agrawal1995ordered} explores violating 2PL in the execution phase but lacks specifications on key design components such as how to track dependency and avoiding deadlocks effectively. It has no qualitative or quantitative analysis on cascading aborts, a major concern of the approach, on modern systems. In contrast, this paper thoroughly analyzes the effect of cascading aborts and the inherent tradeoff between waits and aborts both qualitatively and quantitatively with proposed optimizations. We further discussed both key designs and new techniques in details such as safe retire with program analysis used in \name. In the end, we performed evaluations on modern systems comparing with state-of-the-art baselines.

\vspace{-.1in}
\subsection{Reading Uncommitted Data}

Previous work also proposed non-locking concurrency control protocols that can read uncommitted data. Faleiro et al.~\cite{faleiro2017high} proposed a protocol for deterministic databases that enables early write visibility, meaning that a transaction's writes are visible prior to the end of the execution. This protocol leverages the determinism where transaction execution is ordered prior to execution and the writes can be immediately visible to later transactions if the writing transaction will certainly commit. In contrast, \name does not 
rely on the assumption of determinism. 

Hekaton~\cite{diaconu13, larson11} proposed two protocols --- pessimistic version and optimistic version --- for main memory databases based on multiversioning. The pessimistic protocol allows for eager updates. Operations like appending updates to a read-locked data or reading the last-committed version of a write-locked data will not be blocked. If the owner of uncommitted dirty data is in \emph{preparing state}, dirty reads are allowed as well. However, as dirty data is not visible if its owner is in \emph{active state}, a write operation over write locked data by an active transaction will still be blocked. \name makes uncommitted data visible to reduce blocking time for transactions with write-after-write conflicts. Similarly to Hekaton, \name also tracks dependencies of transactions due to visible dirty data for serializable correctness, but in a different way. 

In addition to IC3, 
runtime pipelining~\cite{xie2015high} (RP) is another variant of transaction chopping. It is based on table-level static analysis combined with runtime enforcement. Specifically, they firstly derive a total ranking of all the read-write tables. Then it orders the sub-transactions based on the rank and enforces the execution to follow the order. However, sub-transactions still cannot be arbitrarily small to allow for more concurrency for two reasons. First, similar to IC3, accesses must be merged into one piece if they cause crossing of C-edges. 
Second, table-level analysis allows for less concurrency than column-level analysis.

Deferred runtime pipelining~\cite{mu2019deferred} (DRP) extends runtime pipelining to support both transactions where the access sets are known and unknown.
Similar to \name, DRP allows transactions to read \textit{tame} transactions' uncommitted data whenever the updates are done. However, DRP imposes stronger assumptions on the tame transactions that all the accesses must be known before execution to ensure serializability and being deadlock-free. DRP also introduces deferred execution to know when the updates are done and to reduce cascading aborts. However, the technique can be applied only when later operations do not depend on the previous ones. 


There are works redesigning the architecture to allow reading uncommitted data for hardware transaction. For example, Jeffrey et al. proposed Swarm~\cite{swarm1, swarm2} which divides a sequential program into many small ordered transactions and speculatively runs them in parallel. 
Unlike Swarm, \name is 
implemented in software and can be easily integrated into existing 2PL-based database systems.

\vspace{-.15in}
\subsection{Transaction Scheduling}

Previous work has investigated techniques to reschedule operations within a transaction for better performance. 

Quro~\cite{yan2016leveraging} changes the order of operations within a transaction to make hotspots appear as close to commit as possible to reduce the duration of locking period. However, Qura is subject to data dependency in the transaction thus often not able to flexibly move hotspots arbitrarily. \name does not require changing the order of transaction operators, but changes the concurrency control to handle hotspots, making it able to improve performance for a wider range of transactions than aforementioned work.

Ding et al.~\cite{ding2018improving} reorders the transactions within a batch to minimize inter-transaction conflicts and improve OCC for highly contentious cases. It models the problem of finding the best order while preserving the correctness as a feedback vertex set problem over directed graph. For each batch, they run a proposed greedy algorithm to approximate the solution of the NP-hard problem.
However, empirical evaluation shows the reordering process can take up to 17$\times$ of the transaction processing time. 
making the end-to-end performance lower than \name.

\section{Conclusion}
We proposed \name, a concurrency control protocol that extends traditional 2PL but allows the two-phase rule to be violated by retiring locks early. Through extensive analysis and performance evaluation, we demonstrated that \name can lead to significant performance improvement when the workload contain hotspots. Evaluation on TPC-C shows a performance advantage of up to 3$\times$. 

\bibliographystyle{abbrv}
\bibliography{bamboo,db}

\end{document}